\documentclass[twoside]{article}

\usepackage[accepted]{aistats2025}
\bibliographystyle{apalike}

\usepackage{preamble}

\begin{document}

%
\runningtitle{Two-Stage Rollout Designs with Clustering for Interference}

%

\twocolumn[

\aistatstitle{Analysis of Two-Stage Rollout Designs with Clustering \\for Causal Inference under Network Interference}

\aistatsauthor{ Mayleen Cortez-Rodriguez \And Matthew Eichhorn \And  Christina Lee Yu }

\aistatsaddress{Cornell University \And Cornell University \And Cornell University} ]

\begin{abstract}
  Estimating causal effects under interference is pertinent to many real-world settings. Recent work with low-order potential outcomes models uses a rollout design to obtain unbiased estimators that require no interference network information. However, the required extrapolation can lead to prohibitively high variance. To address this, we propose a two-stage experiment that selects a sub-population in the first stage and restricts treatment rollout to this sub-population in the second stage. 
We explore the role of clustering in the first stage by analyzing the bias and variance of a polynomial interpolation-style estimator under this experimental design. Bias increases with the number of edges cut in the clustering of the interference network, but variance depends on qualities of the clustering that relate to homophily and covariate balance. There is a tension between clustering objectives that minimize the number of cut edges versus those that maximize covariate balance across clusters. 
Through simulations, we explore {a bias-variance} trade-off and compare the performance of the estimator under different clustering strategies.
\end{abstract}

\section{Introduction} \label{sec:introduction}

The stable unit treatment value assumption (SUTVA) is critical for many classic causal inference methods, but is violated in settings with interference, where the outcome of an individual can be affected by the treatment assignment of another. Interference introduces bias into estimators, potentially leading to inaccurate conclusions about causal effects when ignored \citep{sobel2006randomized}. Many domains experience interference. In evaluating the effect of a public health intervention such as a vaccine, peer effects from herd immunity play a role \citep{hudgens2008toward}. In evaluating the effect of a new feature on a social media platform, a user's engagement is affected by the behaviors of their social connections, whether they are directly exposed to the new feature or not \citep{biswas2018estimating, aral2012identifying}.

We exploit both a rich two-stage randomized experimental design and a flexible potential outcomes model to estimate the \textit{total treatment effect} (TTE), the difference in average outcomes of the population when everyone is treated versus untreated. The TTE estimand, sometimes called the \textit{global average treatment effect}, is a natural choice in applications where the decision-maker needs to decide between adopting the new intervention for everyone or sticking with the status quo, such as a tech company choosing a single user-feed content recommendation algorithm for their social media platform. The class of experimental designs we consider in this paper are called \textit{staggered rollout designs}, where treatment is assigned over different time periods to increasing subsets of participants until it has been rolled out to all subjects designated for treatment. This style of experiment is common on online platforms \citep{xu2018sqr} to mitigate the possible, unknown risks related to introducing a new feature, and in medicine \citep{brown2006stepped}, where for logistical or financial reasons it may be impossible to deliver treatment to all participants at once. 

\paragraph{Related Work.} 
Many prior approaches for causal inference under interference consider cluster randomized designs \citep{sobel2006randomized,hudgens2008toward,liu2014large,ugander2013graph, gui2015network, eckles2017design, auerbach2021local,brennan2022cluster, ugander2023randomized}. These works exploit structural assumptions on the underlying interference network to reduce bias in the difference in means estimator or reduce variance in the Horvitz-Thompson estimator via cluster randomized designs. Some of these works rely on the assumption of \textit{partial interference}, which posits that the underlying network is made up of disjoint groups and interference only occurs within, not across, groups \citep{sobel2006randomized,hudgens2008toward, liu2014large, bhattacharya2020causal, auerbach2021local}. Other works are devoted to proposing cluster randomized designs that exploit knowledge about the network and its structure to minimize the number of edges that cross clusters in settings where partial interference may not hold \citep{ugander2013graph, gui2015network, eckles2017design, brennan2022cluster, ugander2023randomized}. 

Another strand of literature exploits assumptions on the potential outcomes in their methodology while considering simpler, unit-randomized designs instead of, or in addition to, cluster-randomized designs \citep{toulis2013estimation,cai2015social, gui2015network, parker2017optimal,chin2019featureengineering}. These approaches assume linear or generalized linear potential outcomes models, reducing the estimation task to regression. A drawback of these approaches is their assumption of \textit{anonymous interference}, which posits that only the number, not identity, of treated units affects an individual's outcome \citep{hudgens2008toward, liu2014large, li2022random}, and the entire population shares the same outcomes model.

To address this, \citet{cortez2022neurips} introduced a flexible class of potential outcomes models that allow for heterogeneous treatment effects by relaxing the anonymous interference assumption and instead imposing the $\beta$-\textit{order interactions} assumption, where interference effects are constrained to small subsets of the population. Other recent work has also adopted this model \citep{cortez2023jci, lowordercluster-2024}.
In both our work and \cite{cortez2022neurips}, the class of estimators under consideration are based on polynomial interpolation, where a key insight is that under a $\beta$-order potential outcomes model, the expected average outcome of the population is a $\beta$-degree polynomial in the treatment level. This style of estimator was first introduced by \citet{yu2022PNAS} for the case when the polynomial has degree $1$, and generalized by \citet{cortez2022neurips} to polynomials of higher degree. A drawback is that the estimator has prohibitively high variance when the polynomial degree is greater than $1$, especially when the treatment probability is small. The present work addresses this by using a two-stage experimental design to reduce variance in polynomial interpolation estimators for the TTE. A key contribution of \cite{yu2022PNAS} and \cite{cortez2022neurips} is the unbiased estimation of causal effects \textit{without any knowledge of the underlying interference network}, which the majority of prior approaches require. Our approach does not require network knowledge, but we show how using graph knowledge to select a good graph clustering with which to correlate treatments may improve our estimator's performance.

A few recent works also study rollout designs for causal inference under interference. \citet{han2022detecting} present statistical tests to detect the presence of interference using rollout designs. Our work differs from theirs in that our focus is estimating treatment effects under interference, not detecting its presence. \citet{ari2023rollout} leverage rollout designs as part of a model selection mechanism to select a ``best'' model for interference. While they also study the TTE, their focus is on its identification conditions and how rollout designs aid in satisfying them. \citet{viviano2020experimental} does not leverage rollout designs, instead considering a two-wave experiment that uses a pilot study in the first wave to minimize variance in causal effect estimation from a main experiment in the second wave. Unlike our two-stage approach, their two-wave design does not utilize a staggered rollout. Furthermore, \citet{viviano2020experimental} requires anonymous interference, whereas our approach does not. \cydelete{The approach most similar to our work is that of \citet{cortez2022neurips}, who consider staggered rollout designs for estimating the TTE alongside the class of models and estimators we analyze in the present work. We present a modification to their experiment design that reduces the mean squared error (MSE) in many settings.}

\paragraph{Contributions}
We propose a two-stage experiment design to address the high variance of polynomial interpolation estimators under $\beta$-order interactions with large $\beta$. {Given an overall treatment budget $p$, for a chosen parameter $q \in [p,1]$}, the first stage samples a {$p/q$ fraction} subset of the population, and the second stage runs a staggered rollout on the selected subset {with an effective budget of $q$.} \cydelete{We introduce a parameter $q$ that helps us consider a trade-off between the sampling in the first stage and the rollout of treatment in the second stage. Larger values of $q$ correspond to selecting a smaller subset in the first stage and treating a higher proportion of these selected units in the second stage.}
{We propose a polynomial interpolation estimator that uses the higher effective budget $q$, and scales the final outcome to account for the fact that only $p/q$ fraction of units are eligible for treatment in stage two. The increased effective budget reduces the variance from polynomial interpolation. We show the following insights:}
\begin{itemize}
    \item This two-stage estimator has less variance than a one-stage rollout interpolation estimator, but the {sub-}sampling in the first stage introduces bias.
    \item Larger values of $q$ lead to higher bias due to edges cut in stage one of the design (that is, edges crossing between selected units and unselected units).
    \item When clustering is used in the first stage, the bias and variance of the estimator are affected by the edges between clusters and the variance of the average treatment effect across clusters.
    \item Since the variance of average cluster effects relates to homophily, we see a tension between two clustering objectives: minimizing cut edges versus maximizing covariate balance. 
    \item Even without network or covariate information, the two-stage approach improves for large values of $\beta$ (i.e. richer models); a good clustering can help improve further.
\end{itemize}

\section{Preliminaries} \label{sec:preliminaries}
We estimate the effect of a treatment on a population of $n$ individuals, denoted $[n] := \{ 1, \hdots, n \}$, via a randomized experiment. Their treatment assignments are collected in a binary vector $\bz \in \{0,1\}^n$, where $z_i=1$ (resp. $z_i=0$) indicates that unit $i$ is assigned to treatment (resp. control).
We allow for \textit{interference}, so the potential outcome of individual $i$ may be a function of the entire treatment vector $Y_i:\{0,1\}^n\to \mathbb{R}$. We frame our analysis around potential outcomes with the following two features.

First, individual $i$'s outcome is a function of a small subset of the population. We visualize this subset as $i$'s in-neighborhood $\cN_i$ in a directed interference graph, where an edge from $j$ to $i$ indicates that $j$'s treatment affects $i$'s outcome; we call $j$ an \textit{in-neighbor} of $i$. We assume the graph does not change over the timescale of the experiment.

\textbf{Assumption 1 (Neighborhood Interference):} 

If $\bz, \bz'$ have $z_j\!=\!z'_j \ \forall\; j\!\in\!\cN_i$, then $Y_i(\bz)\!=\!Y_i(\bz') \hfill \forall \; i$. 

We use $d := \max_i |\cN_i|$ to denote the maximum in-degree of the network. Differing from most prior work with neighborhood interference, the underlying interference graph may be \textit{unknown}. In some cases, we leverage various levels of network knowledge such as covariate information or full edge information. 

Second, following \citet{cortez2022neurips}, we use the binary nature of the treatments $z_i \in \{0,1\}$ to represent \textit{any} potential outcomes under neighborhood interference as a polynomial in $\bz$:
\begin{equation}
    Y_i(\bz) = \textstyle\sum_{\cS \subseteq \cN_i} c_{i,\cS} \textstyle\prod_{j \in \cS} z_j, \label{eq:ppom}
\end{equation}
where the coefficients $c_{i,\cS}$ represent the additive effect to individual $i$'s outcome if everyone in $\cS$ is treated. {Our second assumption posits that each outcome is affected only by \textit{small} treated subsets.}

\textbf{Assumption 2 ($\beta$-Order Interactions):} $c_{i,\cS} = 0$ for all $|\cS| > \beta$. 

Under this assumption, $Y_i(\bz)$ is a polynomial with degree at most $\beta$. This is motivated by settings where an individual is separately affected by smaller sub-communities of their neighbors (e.g. colleagues, family members, close friends) rather than the neighborhood as a whole. The case $\beta=1$ is the heterogeneous linear outcomes model explored in \cite{yu2022PNAS}, which generalizes the linear models commonly used in applied settings. When $\beta = d$, we return to the unrestricted neighborhood interference setting. \medelete{Like \cite{cortez2022neurips,cortez2023jci}, the current work exploits this structure when $\beta$ is much smaller than $d$.}

Our estimand of interest is the \textit{total treatment effect} (TTE), the average difference in outcomes when everyone versus no one is treated:
\begin{equation} \label{eq:tte_coeff}
    \tfrac{1}{n} \textstyle\sum_{i=1}^{n} \big( Y_i(\mathbf{1}) - Y_i(\mathbf{0}) \big) = \tfrac{1}{n}\textstyle\sum_{i\in[n]}\textstyle\sum_{\cS \in \cS_i^\beta \setminus \varnothing} c_{i,\cS},
\end{equation}
where $\cS_i^\beta :=\{\cS \subseteq \cN_i \ : \ |\cS| \leq \beta\}$.

{Throughout, we consider \textit{completely randomized} experimental designs, $\bz \sim \textrm{CRD}(xn,n)$, wherein a uniform random subset of $xn$ entries of $\bz$ are assigned $1$. We use the notation $\big[ \tfrac{xn}{n} \big]_m = \prod_{i=0}^{m-1} \tfrac{xn-i}{n-i}$ to denote the probability that a subset of $m$ individuals is fully treated under such a design.}
\section{Two-Stage Rollout Designs} \label{sec:newmethod}


\begin{figure}
    \centering
    \includegraphics[width=0.45\textwidth]{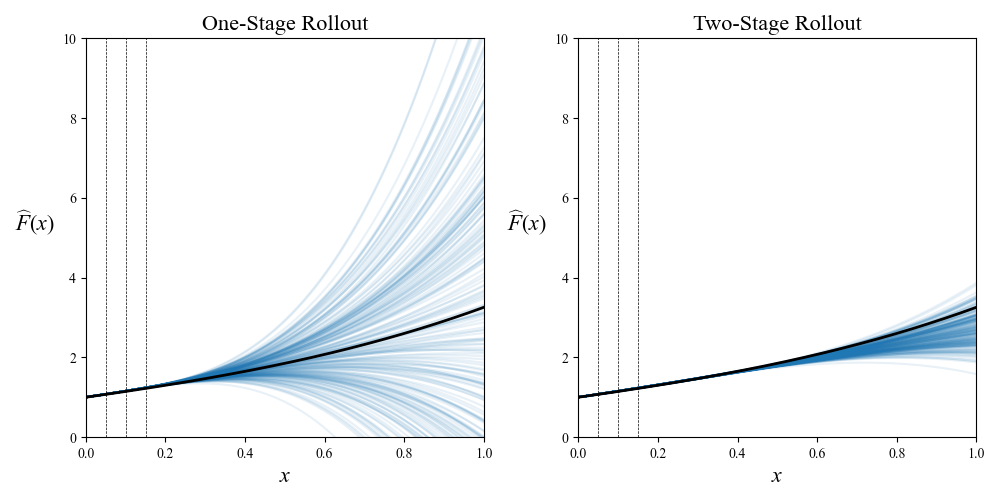}
    \caption{Visualization of extrapolated polynomials used to estimate $\TTE$ across 200 runs of a rollout experiment on a $20\times20$ lattice with $\beta=3$. The left plot uses a one-stage rollout ($p=0.15$), as in~\cite{cortez2022neurips}, while the right plot uses a two-stage rollout ($q=0.375$). The two-stage design incurs bias, but extrapolation in the one-stage design leads to higher variance.}
    \label{fig:visualize_interpolation}
\end{figure}

Under the $\beta$-order interactions model described in Section~\ref{sec:preliminaries}, the quantity
\[
    \E[ \tfrac{1}{n} \textstyle\sum_{i=1}^{n} Y_i(\bz)] = \tfrac{1}{n} \textstyle\sum_{i\in[n]} \textstyle\sum_{\cS \in \cS_i^\beta} c_{i,\cS} \cdot \big[\tfrac{xn}{n}\big]_{|\cS|} =: F(x),
\]
with the expectation taken over $\bz \sim \textrm{CRD}(xn,n)$ is a polynomial $F$ \mcreplace{in the proportion $x$ of individuals receiving treatment and has}{with} degree at most $\beta$. \mcedit{Here, $x$ is a variable that represents the proportion of individuals assigned to treatment.} We can recover this polynomial (along with $\TTE = F(1) - F(0)$) by interpolating through $\beta+1$ evaluation points. \citet{cortez2022neurips} use this observation to describe a TTE estimator that leverages a  \textit{staggered rollout} experimental design, wherein $\beta+1$ observations are taken as treatment is rolled out to increasing proportions of the population. 
This estimator, equation 4 of \citet{cortez2022neurips}, is unbiased with variance $O\big( \frac{d^2 \beta^{2\beta+2}}{n p^{2\beta}}\big)$, where $p$ represents the \textit{treatment budget}, i.e. proportion of the population assigned to treatment in the final stage of the experiment. In each time step $t \in \{0, \hdots, \beta\}$, where $tpn/\beta$ individuals are assigned to treatment, the estimator accrues sampling variance by using the observed mean outcome $\widehat{F}(tp/\beta)$ as a proxy for $F(tp/\beta)$. Then, the extrapolation of these observations to estimate $\widehat{F}(1)$ magnifies this variance by a factor of $(\beta/p)^{2\beta}$ (see Figure~\ref{fig:visualize_interpolation}). To lessen the effects of extrapolation, we would like to sample points from $F(x)$ closer to $1$. However, to adhere to our treatment budget $p$, we must restrict the rollout to a subset of the population. This motivates the following two-stage rollout design.



\begin{definition}[Two-Stage Rollout Design] \label{def:2stage_design} \
    Given a population $[n]$, model degree $\beta$, treatment budget $p$, and parameter $q \in [p,1]$, we consider experimental designs with the following two stages. 

    \textbf{Stage 1:} Select a subset of the population $\cU \subseteq [n]$ with {$|\cU| = pn/q$ and} marginals $\Pr(i \in \mathcal{U}) = p/q$.

    \textbf{Stage 2:} Run a $(\beta+1)$-stage staggered CRD rollout experiment on the units in $\mathcal{U}$, leaving all other units untreated. Such an experiment satisfies:
    \begin{description}
        \item[Treatment Restriction:] $z_i^t = 0$ \hfill $\forall \; t, i \not\in \mathcal{U}$. 
        \item[Per-Round Treatment:] $\bz_{\cU}^t \sim \emph{CRD}\big(\tfrac{tpn}{\beta},|\cU|\big)$ \hfill $\forall \; t$.
        \item[Monotonicity:]  $z_i^t \geq z_i^{t-1}$ \hfill $\forall \; i,t \geq 1.$
    \end{description}
\end{definition}

We interpret the parameter $q$ as the \textit{effective treatment budget} of the units within set $\cU$. \mcedit{The treatment restriction condition emphasizes that only indivudals chosen in the first stage are eligible for treatment in the second stage. The per-round treatment condition ensures that at each time step $t$ of the rollout, $tpn/\beta$ units are selected with a completely randomized design from the set of individuals chosen in the first stage ($\cU$).} When $q=p$, $\mathcal{U}=[n]$ and this design reduces to the CRD (one-stage) staggered rollout designs from \citep{cortez2022neurips}. \mcedit{The monotonicity assumption is natural in settings where once a unit is treated, they are always treated because treatment cannot be ``taken back.''} We consider two variants of the two-stage rollout design.

\begin{example}[Unit CRD Rollout Design] \label{ex:unit2stage}

    Select $\cU$ according to a $\emph{CRD}\big(np/q,n\big)$ design. To realize this design,  one can sample $U_i \sim \emph{Unif}(0,1)$ i.i.d. for each $i \in [n]$, let $\mathcal{U}$ comprise the $np/q$ individuals with highest $U_i$, and let each $\bz^t$ indicate the $tnp/\beta$ individuals with highest $U_i$.
\end{example}

\begin{example}[Clustered CRD Rollout Design] \label{ex:cluster2stage} \

    Partition the individuals into $n_c$ equal-sized clusters. In Stage 1, use a $\emph{CRD}\big(n_c p/q,n_c\big)$ design to select a subset of clusters, and include all individuals from these clusters in $\cU$.
\end{example}

We assume throughout that the parameters are appropriately chosen to make all treatment sizes \mcreplace{integral}{whole numbers}. Now, following the presentation of \citet{cortez2022neurips}, we develop a $\TTE$ estimator for data collected under such an experiment that leverages the connection to Lagrange polynomial interpolation.  
\begin{equation} \label{eq:estimatorDefn}
    \widehat{\TTE} := 
        \tfrac{q}{np} \sum\limits_{t=0}^{\beta} h_{t,q} \sum\limits_{i=1}^{n} Y_i(\mathbf{z}^t),
\end{equation}
{where $h_{t,q} = \prod_{\substack{s=0 \\ s\not=t}}^{\beta} \tfrac{\beta/q-s}{t-s} - \prod_{\substack{s=0 \\ s\not=t}}^{\beta} \tfrac{-s}{t-s}$} \mcedit{come from the Lagrange coefficients}. When $q=p$, this estimator coincides with the polynomial interpolation estimator of~\citet{cortez2022neurips}. The estimator is equivalent to applying the polynomial interpolation estimator with the Stage 2 budget $q$ and then scaling the result by $q/p$, since only a $p/q$ fraction of units are selected in Stage 1 to be eligible for treatment.

The estimate can be evaluated in $O(n\beta)$ time and requires no information about the edges in the interference network. 
While $\beta$ is a parameter of the potential outcomes model, it also appears in the estimator due to its use of polynomial interpolation: fitting a $\beta$-degree polynomial requires $\beta+1$ points. \mcedit{Note that this estimator requires knowledge of $\beta$, as does the estimator in ~\citet{cortez2022neurips}. Determining or choosing $\beta$, while an interesting and practical research direction, is beyond the scope of this paper.}
\mcedit{
\begin{remark}
    The estimator defined in \eqref{eq:estimatorDefn} does not require knowledge of the interference network. However, the design described in Definition \ref{def:2stage_design} may or may not require knowledge of the network depending on how the subset is selected in Stage 1. For example, the design described in Example \ref{ex:unit2stage} does not require knowledge of the interference network, but the design described in Example \ref{ex:cluster2stage} will require graph knowledge if the clustering method requires it.
\end{remark}
}

To analyze a Clustered CRD Rollout Design, we introduce the following notation. A clustering $\Pi$ of the interference network is a partition of $[n]$ into $n_c$ disjoint sets; $\pi \in \Pi$ is a subset $\pi \subseteq [n]$ of units assigned to the same cluster. Given a unit $i \in [n]$, we define $\pi(i)$ as the cluster containing $i$. Given $\cS\!\subseteq\![n]$, we define $\Pi(\cS) := \{ \pi \in \Pi \ | \ \exists \; i\!\in\!\cS \colon \pi = \pi(i) \}$. We define the average treatment effect of a particular cluster $\pi\in\Pi$ as
\[
\bar{L}_\pi := \frac{n_c}{n} \sum\limits_{i\in [n]} \sum\limits_{\cS \in \cS_i^\beta \setminus \varnothing} c_{i,\cS} \cdot \Ind \big(\cS \subseteq \pi \big),
\]
which is equivalent to the portion of the $\TTE$ contained in a single cluster $\pi$.

Throughout our analysis, it will be useful to consider the \textit{cut effect}, defined as:
\[
    C(\delta(\Pi)) := \tfrac{1}{n} \sum\limits_{i\in [n]} \sum_{\cS \in \cS_i^\beta} c_{i,\cS} \cdot \Ind \big(|\Pi(\cS)| \geq 2 \big).
\]
The cut effect $C(\delta(\Pi))$ denotes the average treatment effect attributable to subsets $\cS$ that span across multiple clusters. Since each $\cS \subseteq \cN_i$ for some $i \in [n]$, $C(\delta(\Pi)) = 0$ when there are no edges cut by the clustering, and it increases with the number of cut edges.
{Note that the cut effect can be equivalently expressed as $C(\delta(\Pi))=\TTE-\tfrac{1}{n_c}\sum_{\pi\in\Pi}\bar{L}_\pi$.}
\section{Theoretical Results} \label{sec:theoretical_results}

In this section, we consider the bias and variance of estimator \eqref{eq:estimatorDefn}, with a focus on the Clustered CRD Rollout Design from Example~\ref{ex:cluster2stage}. \mcedit{In our results, we notationally distinguish between theoretical variance ($\Var$) and empirical variance ($\widehat{\Var}$) with the hat notation.} Our first theorem gives a general expression for the bias of estimator \eqref{eq:estimatorDefn} under two-stage rollout designs.

\begin{theorem} \label{thm:bias}
    Under a $\beta$-order potential outcomes model and a Two-Stage Rollout Design, estimator \eqref{eq:estimatorDefn} has bias
    \[
        \tfrac{1}{n} \sum_{i\in[n]} \sum_{\substack{\cS \in \cS_i^\beta \\ \cS \ne \varnothing}} c_{i,\cS} \Big[\tfrac{q}{p} \cdot \Pr\big( \cS \subseteq \cU \big)-1\Big].
    \]
\end{theorem}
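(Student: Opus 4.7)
The plan is to compute $\E[\widehat{\TTE}]$ directly by expanding the potential outcomes as polynomials in $\bz^t$, exploiting the tower property to split the randomness into Stage 1 (selection of $\cU$) and Stage 2 (rollout within $\cU$), and then applying a Lagrange interpolation identity in the variable $t$.

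First, I would substitute the potential outcomes expression \eqref{eq:ppom} into \eqref{eq:estimatorDefn} and use linearity of expectation to obtain
\[
\E[\widehat{\TTE}] = \tfrac{q}{np} \sum_{t=0}^{\beta} h_{t,q} \sum_{i=1}^{n} \sum_{\cS \subseteq \cN_i} c_{i,\cS} \, \E\!\left[\textstyle\prod_{j \in \cS} z_j^t\right].
\]
The key structural observation is that $\prod_{j\in\cS} z_j^t$ equals $1$ precisely when $\cS$ is fully treated at time $t$, which requires $\cS\subseteq\cU$ (due to the treatment restriction). Conditioning on $\cU$ and using the per-round treatment property, the conditional probability that $\cS$ is fully treated at time $t$ given $\cS\subseteq\cU$ depends only on $|\cS|$ and equals
\[
g_{\cS}(t) := \prod_{i=0}^{|\cS|-1} \frac{tpn/\beta - i}{pn/q - i},
\]
viewed as a function of $t$. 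This is a polynomial in $t$ of degree $|\cS|\le\beta$ with $g_{\cS}(0)=0$ for $\cS\ne\varnothing$ and $g_{\cS}(\beta/q)=1$ (since $tpn/\beta$ at $t=\beta/q$ equals $pn/q=|\cU|$). Therefore $\E[\prod_{j\in\cS} z_j^t] = \Pr(\cS\subseteq\cU)\cdot g_{\cS}(t)$.

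Next, I would pull the sum over $t$ inside and invoke the identity $\sum_{t=0}^{\beta} h_{t,q}\, f(t) = f(\beta/q) - f(0)$, which holds for any polynomial $f$ of degree at most $\beta$ since $h_{t,q}$ is exactly the difference of the Lagrange basis polynomials on nodes $\{0,1,\dots,\beta\}$ evaluated at $\beta/q$ and $0$. Applying this to $f=g_{\cS}$ gives $\sum_t h_{t,q}\, g_{\cS}(t) = 1$ when $\cS\ne\varnothing$ and $0$ when $\cS=\varnothing$. Substituting yields
\[
\E[\widehat{\TTE}] = \tfrac{q}{np}\sum_{i\in[n]}\sum_{\substack{\cS \in \cS_i^\beta \\ \cS\ne\varnothing}} c_{i,\cS}\,\Pr(\cS\subseteq\cU),
\]
and subtracting the expression for $\TTE$ from \eqref{eq:tte_coeff} gives the claimed bias.

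The main obstacle is verifying the polynomial-in-$t$ structure of $g_{\cS}$ under the two-stage design carefully enough to justify the Lagrange identity: one must confirm that the within-$\cU$ CRD genuinely produces the hypergeometric form above (which requires the per-round marginals on $\cU$ to be CRD$(tpn/\beta,\,pn/q)$), and that the boundary values $g_{\cS}(0)$ and $g_{\cS}(\beta/q)$ are as claimed. Once this polynomial identity is in hand, the rest of the argument is mechanical and, pleasingly, the particular Stage 1 mechanism (Unit vs. Clustered CRD) enters only through $\Pr(\cS\subseteq\cU)$, which is precisely why the bias expression is stated at this level of generality.
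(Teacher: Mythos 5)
Your proposal is correct and follows essentially the same route as the paper: condition on $\cU$, observe that the conditional probability of $\cS$ being fully treated at step $t$ is the hypergeometric expression $\big[\tfrac{tpn/\beta}{|\cU|}\big]_{|\cS|}$ (a degree-$|\cS|$ polynomial), apply the Lagrange identity built into $h_{t,q}$ to get $1-0$ for nonempty $\cS$, and then average over $\cU$. The only cosmetic difference is that you state the interpolation identity in the variable $t$ with evaluation points $\beta/q$ and $0$, whereas the paper uses the rescaled variable $x=tq/\beta$ with evaluation at $1$ and $0$; these are equivalent under a linear change of variables.
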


{A proof appears in Appendix~\ref{app:proofs} and uses the law of total expectation to derive an expression for $\E\big[\widehat{\TTE}\big]$ by first conditioning on $\cU$. 
When $|\cS| = 1$, i.e. $\cS = \{j\}$, the marginal condition $\Pr\big(j \in \cU \big) = p/q$ in Stage 1 of the experiment ensures that these terms will not contribute any bias. Rather, all of the bias comes from larger subsets $\cS$. Intuitively, bias arises from the possibility that an individual's neighborhood can be partially within and partially outside of $\cU$; the estimation of such an individual $i$'s treatment effect via interpolation will be biased as the rollout proportion in each time step will not match the expected proportion of $\cN_i$ that is treated. However, when $q=p$, $\cU = [n]$ such that $\Pr( \cS \subseteq \cU )=1$ and the estimator is unbiased.}  

The next two corollaries specialize Theorem~\ref{thm:bias} to the two-stage designs described in Examples \ref{ex:unit2stage} and \ref{ex:cluster2stage}.

\begin{corollary}
    Under a $\beta$-order potential outcomes model, the bias of \eqref{eq:estimatorDefn} under a two-stage Unit CRD Rollout Design is
    \[
        \tfrac{1}{n} \sum_{i\in[n]} \sum_{\substack{\cS \in \cS_i^\beta\setminus \varnothing }} c_{i,\cS} \Big[\tfrac{q}{p} \cdot \big[ \tfrac{(p/q)n}{n} \big]_{|\cS|} - 1 \Big].
    \]
\end{corollary}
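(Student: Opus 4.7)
The plan is to derive this corollary as a direct specialization of Theorem~\ref{thm:bias} by computing $\Pr(\cS \subseteq \cU)$ under the Unit CRD Rollout Design from Example~\ref{ex:unit2stage}. Since Theorem~\ref{thm:bias} already provides the bias formula in terms of this probability, the only task is to substitute in the correct value for the specific sampling mechanism.

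First, I would recall that under the Unit CRD Rollout Design, $\cU$ is chosen according to a $\mathrm{CRD}(np/q, n)$, meaning a uniformly random subset of $[n]$ of size $np/q$. For any fixed $\cS \subseteq [n]$ with $|\cS| = m$, the event $\cS \subseteq \cU$ is equivalent to all $m$ elements of $\cS$ being among the $np/q$ selected units. By a standard counting argument,
\[
    \Pr(\cS \subseteq \cU) \;=\; \frac{\binom{n-m}{np/q - m}}{\binom{n}{np/q}} \;=\; \prod_{i=0}^{m-1} \frac{np/q - i}{n-i} \;=\; \Big[\tfrac{(p/q)n}{n}\Big]_{m},
\]
using the falling-factorial-style notation $[\cdot]_m$ defined in Section~\ref{sec:preliminaries}.

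Finally, I would substitute this expression into the bias formula from Theorem~\ref{thm:bias}, yielding exactly the stated result. There is no real obstacle here; the entire argument is a one-line specialization. If anything, the only step that warrants brief justification is why the sampling probability of a set of size $m$ is given by $[(p/q)n / n]_m$, which follows immediately from the definition of the CRD and matches the notation introduced in the preliminaries.
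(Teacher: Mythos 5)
Your proposal is correct and matches the paper's intended route: the corollary is an immediate specialization of Theorem~\ref{thm:bias}, obtained by noting that under $\cU \sim \mathrm{CRD}(np/q, n)$ the inclusion probability of a set of size $m$ is $\Pr(\cS \subseteq \cU) = \big[\tfrac{(p/q)n}{n}\big]_m$, exactly as the notation from Section~\ref{sec:preliminaries} prescribes. Nothing is missing.
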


\begin{corollary}
    Under a $\beta$-order potential outcomes model and a clustering $\Pi$ of the interference network into $n_c$ equal-sized clusters, the bias of \eqref{eq:estimatorDefn} under a two-stage Clustered CRD Rollout Design is
    \[
       \tfrac{1}{n} \sum_{i\in[n]} \sum_{\cS \in \cS_i^\beta \setminus \varnothing} c_{i,\cS} \Big[\tfrac{q}{p} \cdot \big[\tfrac{(p/q)n_c}{n_c}\big]_{|\Pi(\cS)|} - 1\Big].
    \]
\end{corollary}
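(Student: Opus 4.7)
The plan is to specialize Theorem~\ref{thm:bias} to the Clustered CRD Rollout Design by computing $\Pr(\cS \subseteq \cU)$ for this particular design and plugging the result into the general bias formula. Since Theorem~\ref{thm:bias} already reduces the computation of the bias to the quantity $\Pr(\cS \subseteq \cU)$ for each subset $\cS \in \cS_i^\beta \setminus \varnothing$, the corollary follows immediately once this probability is identified.

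First I would observe that under the Clustered CRD Rollout Design, the selected set $\cU$ is the union of all units in the clusters chosen in Stage 1. Consequently, $\cS \subseteq \cU$ if and only if every cluster that $\cS$ intersects is among the selected clusters, i.e. $\Pi(\cS) \subseteq \Pi(\cU)$. The number of clusters that must be selected is exactly $|\Pi(\cS)|$, and these clusters are distinct by definition of $\Pi(\cS)$.

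Next I would compute the probability of this event under the Stage 1 design $\mathrm{CRD}(n_c p/q, n_c)$, which chooses a uniformly random $(p/q)n_c$-subset of the $n_c$ clusters. The probability that a fixed set of $|\Pi(\cS)|$ clusters is contained in this random subset is the hypergeometric-style product
\[
    \Pr\bigl( \Pi(\cS) \subseteq \Pi(\cU) \bigr) = \prod_{m=0}^{|\Pi(\cS)|-1} \frac{(p/q)n_c - m}{n_c - m} = \Bigl[\tfrac{(p/q)n_c}{n_c}\Bigr]_{|\Pi(\cS)|},
\]
using the notation defined in Section~\ref{sec:preliminaries}. In particular, when $|\cS|=1$ we have $|\Pi(\cS)|=1$ and this probability equals $p/q$, which recovers the marginal condition required by Definition~\ref{def:2stage_design}.

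Finally, substituting this expression for $\Pr(\cS \subseteq \cU)$ into the bias formula given by Theorem~\ref{thm:bias} yields the stated corollary. No step here is a serious obstacle; the only thing to be careful about is the identification $\{\cS \subseteq \cU\} = \{\Pi(\cS) \subseteq \Pi(\cU)\}$, which relies on the fact that $\cU$ is a union of whole clusters rather than individual units, in contrast to the Unit CRD case where the probability depends on $|\cS|$ instead of $|\Pi(\cS)|$.
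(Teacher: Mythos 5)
Your proposal is correct and matches the paper's (implicit) derivation: the corollary is a direct specialization of Theorem~\ref{thm:bias}, obtained by noting that $\cS \subseteq \cU$ exactly when all $|\Pi(\cS)|$ clusters meeting $\cS$ are selected, and that under the Stage~1 $\mathrm{CRD}(n_c p/q, n_c)$ design this event has probability $\big[\tfrac{(p/q)n_c}{n_c}\big]_{|\Pi(\cS)|}$. Your computation of this probability and the substitution into the general bias formula are both accurate, so there is nothing to correct.
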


When each $c_{i,\cS} \geq 0$, the bias is always negative, as we are omitting some of the effects corresponding to sets $\cS$ for which $|\Pi(\cS)| \geq 2$. The magnitude of the bias can be upper-bounded by { $\tfrac{q-p}{q} \cdot C(\delta(\Pi)).$}
Next, we present a general bound on the variance of estimator \eqref{eq:estimatorDefn}.



\begin{theorem} \label{thm:var_bounds}
    Under a $\beta$-order potential outcomes model with each $c_{i\cS} \geq 0$ and
     a clustering $\Pi$ of the interference network into $n_c$ equal-sized clusters, we bound the variance of \eqref{eq:estimatorDefn} under a two-stage {Clustered} CRD Rollout Design by:
    \begin{align*}
        &\Ind(q < 1) \cdot \tfrac{ q^3 \beta^2 Y_{\max}^2}{ p^2 n} \Big( \tfrac{\beta}{q} \Big)^{2\beta} \big( d^2 + 4 \beta^3 \big) 
        \ + \ \tfrac{q-p}{p(n_c -1)} \widehat{\emph{Var}}(\bar{L}_\pi) \\
        &
        \ + \ \Ind(q > p) \cdot \tfrac{2d^2 Y_{\max}}{n_c} \cdot C(\delta(\Pi)).
    \end{align*}
    {where $Y_{\max}$ is a bound on the outcomes}.

\end{theorem}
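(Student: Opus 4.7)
The plan is to apply the law of total variance, conditioning on the first-stage subset $\cU$:
\[
\Var\big(\widehat{\TTE}\big) = \E_{\cU}\big[\Var(\widehat{\TTE} \mid \cU)\big] + \Var_{\cU}\big(\E[\widehat{\TTE} \mid \cU]\big),
\]
separating variance due to Stage 2 (conditional on $\cU$) from variance due to Stage 1. I expect the first term to yield the first summand in the stated bound, and the second term to decompose into the within-cluster and cross-cluster pieces that yield the remaining two summands.

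For the conditional variance, once $\cU$ is fixed, Stage 2 is exactly a one-stage $(\beta+1)$-step staggered CRD rollout on the subpopulation $\cU$ of size $pn/q$ with effective budget $q$, and our estimator is just $q/p$ times the one-stage polynomial interpolation estimator of \citet{cortez2022neurips} applied to $\cU$. I would invoke their one-stage variance bound on this conditional experiment, using $|\cN_i \cap \cU| \le d$ and $|Y_i(\bz^t)| \le Y_{\max}$, then scale by $(q/p)^2$ and take expectation over $\cU$. When $q=1$, the final step $t=\beta$ deterministically treats all of $\cU$, so this term is absorbed into the $\Ind(q<1)$ factor, matching the first summand.

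For the second term, the same conditional-expectation calculation underlying Theorem~\ref{thm:bias} gives
\[
\E[\widehat{\TTE} \mid \cU] = \tfrac{q}{np} \sum_{i\in[n]} \sum_{\cS \in \cS_i^\beta\setminus\varnothing} c_{i,\cS}\, \Ind(\cS \subseteq \cU).
\]
I would split this sum according to whether $|\Pi(\cS)|=1$ or $|\Pi(\cS)|\ge 2$. For within-cluster $\cS$, the sum collapses to $\frac{q}{p}\cdot\frac{1}{n_c}\sum_{\pi\in\Pi(\cU)}\bar L_\pi$, and since Stage 1 samples $n_cp/q$ clusters by CRD, the standard finite-population variance formula followed by scaling by $(q/p)^2$ produces exactly $\frac{q-p}{p(n_c-1)}\widehat{\Var}(\bar L_\pi)$. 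For cross-cluster $\cS$, I would use $\Var(X+Y)\le 2\Var(X)+2\Var(Y)$ to isolate the cross-cluster contribution and then bound its variance by a second-moment estimate, controlling pair probabilities $\Pr(\cS\subseteq\cU,\cS'\subseteq\cU)$ in terms of the probability that an additional cluster beyond the shared ones is selected (scaling like $1/n_c$), and using $|c_{i,\cS}|\le Y_{\max}$ together with an in-degree counting argument to convert the sum of pair contributions into $d^2$ times $C(\delta(\Pi))$. The $\Ind(q>p)$ factor reflects that this contribution vanishes when $\cU=[n]$.

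The hard step will be the cross-cluster bound: carefully handling the correlations among the indicators $\Ind(\cS\subseteq\cU)$ for cross-cluster sets $\cS$ so that the final bound is \emph{linear} in $C(\delta(\Pi))$ rather than quadratic, and so that only a single factor of $1/n_c$ survives. This will require a judicious overcounting argument that pairs each cross-cluster set with at most $d$ neighboring sets via a common unit or neighborhood, and upper-bounds the joint selection probability by a marginal cluster-selection probability times the conditional probability of selecting one extra cluster.
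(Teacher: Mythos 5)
Your skeleton is the same as the paper's: law of total variance conditioning on $\cU$, re-interpreting Stage 2 given $\cU$ as a one-stage CRD rollout on $\cU$ with budget $q$ and rescaled coefficients (handled by $\Ind(q<1)$ at $q=1$), and splitting $\E[\hTTE\mid\cU]$ into within-cluster and cross-cluster parts, with the within-cluster part giving $\tfrac{q-p}{p(n_c-1)}\widehat{\Var}(\bar L_\pi)$ via the CRD finite-population variance lemma. Two caveats on the first term: the published one-stage bound is only asymptotic, so to get the explicit constant $\tfrac{q^3\beta^2 Y_{\max}^2}{p^2 n}(\beta/q)^{2\beta}(d^2+4\beta^3)$ the paper re-derives a non-asymptotic version (an explicit bound $|h_{t,q}|\le(\beta/q)^\beta$ and a covariance bound of order $p\beta^3/n$ plugged into the variance lemma of the earlier work); simply ``invoking'' the $O(\cdot)$ statement would not produce these constants.

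The genuine gap is in how you combine the two pieces of $\Var_{\cU}(\E[\hTTE\mid\cU])$. Using $\Var(X+Y)\le 2\Var(X)+2\Var(Y)$ doubles the within-cluster coefficient to $\tfrac{2(q-p)}{p(n_c-1)}\widehat{\Var}(\bar L_\pi)$, so your method cannot deliver the stated bound, and it is inconsistent with your own claim that the within-cluster part comes out ``exactly.'' The paper instead expands the variance exactly and bounds the cross term: using that each $c_{i,\cS}\ge 0$ and that under CRD the indicators $\Ind(\cS\subseteq\cU)$, $\Ind(\cS'\subseteq\cU)$ are nonpositively correlated when $\Pi(\cS)\cap\Pi(\cS')=\varnothing$, one has $\Cov\big(\Ind(\cS\subseteq\cU),\Ind(\cS'\subseteq\cU)\big)\le \Pr(\cS\subseteq\cU)\cdot\Ind\big(\Pi(\cS)\cap\Pi(\cS')\ne\varnothing\big)$; the covariance of the within- and cross-cluster sums then contributes a $\tfrac{d\beta Y_{\max}}{n_c}C(\delta(\Pi))$ term and the cross-cluster variance a $\tfrac{d^2 Y_{\max}}{n_c}C(\delta(\Pi))$ term, which combine (via $\beta\le d$) into the stated $\tfrac{2d^2 Y_{\max}}{n_c}C(\delta(\Pi))$. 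Relatedly, your account of where the single $1/n_c$ comes from is off: the conditional probability of selecting one extra cluster is about $p/q$, not $1/n_c$. In the paper the $1/n_c$ arises from a counting argument --- for a fixed cross-cluster $\cS$, the number of units $i'$ whose neighborhood touches a cluster of $\Pi(\cS)$ is at most of order $nd/n_c$ per cluster (cluster size $n/n_c$ times degree $d$) --- combined with $\Pr(\cS\subseteq\cU)\le(p/q)^2$ for $|\Pi(\cS)|\ge 2$, which cancels the $(q/p)^2$ prefactor and yields a bound linear in $C(\delta(\Pi))$. With those two repairs your outline matches the paper's proof.
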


The first term, which has an exponential dependence on $\beta$, is from upper bounding the $\E_{\cU}[\Var_{\bz}(\hTTE \ | \ \cU)]$ term from the law of total variance. As this term grows exponentially with $\beta$, it is large when $\beta$ is not small. Since it also does not depend on the clustering, we can only make it small by choosing $q$ close to 1. When $q=1$ {all selected units are assigned treatment}, so $\E_{\cU}[\Var_{\bz}(\hTTE \ | \ \cU)] = 0$ because there is no randomness in the second stage conditioned on $\cU$.

The second and third terms are from upper bounding $\Var_{\cU}[\E_{\bz}(\hTTE \ | \ \cU)]$. When $q=p$, $\cU = [n]$ is deterministic, such that $\Var_{\cU}[\E_{\bz}(\hTTE \ | \ \cU)] = 0$. When $q > p$, these terms reflect the impact of the clustering on the performance of the estimator. The second term is small if $\widehat{\Var}(\bar{L}_{\pi})$ is small while $n_c$ is still large. Intuitively $\widehat{\Var}(\bar{L}_{\pi})$ is small if the average effects within clusters is well-balanced across clusters, i.e. clusters are similar to each other with respect to their average treatment effects. If covariates are positively correlated with the treatment effects, this encourages clusters that have good covariate balance. The third term depends on the magnitude of the effects from sets $\cS$ with membership in more than one cluster, which is the same expression that showed up in the upper bound on the magnitude of the bias. This encourages clusters that minimize cut edges. 

When the {interference network exhibits strong homophily}, these two clustering objectives are in tension: minimizing cut edges may decrease the cut effect, but increase $\widehat{\Var}(\bar{L}_{\pi})$ by lowering covariate balance. This suggests that traditional clustering algorithms that focus only on graph-based objectives like minimizing the cut may not be optimal. This is important because the literature on graph clustering often focuses on clustering objectives related to graph structure (e.g. edges). Meanwhile, covariate balance is important in causal inference settings where potential outcomes may be correlated with covariates. At the intersection of these two research areas is causal inference under network interference, where an effective clustering should capture more than just graph structure when there is homophily. This is an important reminder that with cluster-randomized designs for causal inference under network interference, considering \textit{both} graph structure and covariate balance may be crucial.

\mcedit{
\begin{remark}
\end{remark}
When we plug $q=p$ into the bound from Theorem \ref{thm:var_bounds}, we obtain variance bound
$\frac{ \beta^{2\beta+2} Y_{\max}^2}{p^{2\beta-1} n} \big( d^2 + 4 \beta^3 \big)$, which we can compare to the asymptotic variance bound $O \Big( \frac{d^2 \beta^{2\beta+2} Y_{\max}^2}{p^{2\beta} n}  \Big)$ from \citet{cortez2022neurips}. If we assume we are in the $\beta \ll d$ regime, then these bounds differ by a factor of $\frac{1}{p}$.}


\subsection{The $\beta=1$ Setting}

Here, we strengthen our variance bounds for the setting of linear ($\beta = 1$) heterogeneous outcomes models. As noted above, estimator \eqref{eq:estimatorDefn} is unbiased in this setting. \cydelete{To reason about the variance, we will follow the perspective of \citet{yu2022PNAS}.} For each $j \in [n]$, let us introduce the quantity 
$L_j = \textstyle\sum_{i \colon j \in \cN_i} c_{i,\{j\}}$
to represent the total \textit{outgoing} effect that treating $j$ has on the population. 
\cydelete{Note that $\TTE = \tfrac{1}{n} \sum_{j} L_j$, and we may re-express estimator \eqref{eq:estimatorDefn}  as
\[
    \widehat{\TTE} = \tfrac{1}{np} \textstyle\sum_{j} L_j z_j^1.
\]}
We use Lemma \ref{lemma:CRD_PNAS}, restated from \citet{yu2022PNAS} to understand the variance of $\widehat{\TTE}$. 
\begin{lemma} \label{lemma:CRD_PNAS}
    Suppose that $\bz \sim \emph{CRD}(p \cdot |\bz|,|\bz|)$. Then, 
    \[
        \emph{Var}\big( \tfrac{1}{p \cdot |\bz|} \textstyle\sum_{i} L_i z_i\big) = \tfrac{1-p}{p \cdot ( |\bz| - 1 )} \cdot \widehat{\emph{Var}} \big( L_i \big),
    \]
    where $\widehat{\emph{Var}} \big( L_i \big) = \tfrac{1}{|\bz|} \sum_{j} (L_i)^2 - \big( \tfrac{1}{|\bz|} \sum_{j} (L_i \big)^2$ and $|\bz|$ is the total number of entries in $\bz$.
\end{lemma}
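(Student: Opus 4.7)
The plan is to directly expand the variance of the scaled sum and exploit the first- and second-moment structure of the indicator vector $\bz$ under the completely randomized design. Writing $n' := |\bz|$ for brevity, $\bz$ is a uniformly random binary vector with exactly $pn'$ ones, so the $z_i$'s are exchangeable with easily computable joint moments.

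First, I would compute the marginal and pairwise moments: $\E[z_i] = p$ for each $i$, and for $i \ne j$,
\[
\E[z_i z_j] = \Pr(z_i = z_j = 1) = \tfrac{pn'(pn'-1)}{n'(n'-1)}.
\]
From these follow $\Var(z_i) = p(1-p)$ and the pairwise covariance $\E[z_i z_j] - p^2 = -\tfrac{p(1-p)}{n'-1}$ for $i \ne j$; the negative sign reflects the sampling-without-replacement nature of the CRD.

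Next, I would expand
\[
\Var\bigl(\textstyle\sum_i L_i z_i\bigr) = \sum_i L_i^2 \, \Var(z_i) + \sum_{i \ne j} L_i L_j \bigl(\E[z_i z_j] - p^2\bigr),
\]
substitute the moments above, and use the identity $\sum_{i \ne j} L_i L_j = \bigl(\sum_i L_i\bigr)^2 - \sum_i L_i^2$ to combine the diagonal and off-diagonal pieces. After grouping terms, a factor of $\tfrac{p(1-p) n'^2}{n'-1}$ emerges, multiplying $\tfrac{1}{n'}\sum_i L_i^2 - \bigl(\tfrac{1}{n'}\sum_i L_i\bigr)^2 = \widehat{\Var}(L_i)$. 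Finally, dividing by the scaling $(pn')^2$ yields the claimed $\tfrac{1-p}{p(n'-1)} \widehat{\Var}(L_i)$.

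This is essentially the classical Horvitz--Thompson-style variance identity for sampling without replacement, so there is no deep obstacle. The only care required is the algebraic collapse that recombines the $\sum_i L_i^2$ and $(\sum_i L_i)^2$ terms into a sample variance; as a sanity check, if all $L_i$ are equal then $\widehat{\Var}(L_i) = 0$ and indeed the left-hand side also vanishes since the sample total $\sum_i L_i z_i = L_1 \cdot pn'$ is deterministic.
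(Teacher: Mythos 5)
Your computation is correct: the marginal and pairwise moments of $z_i$ under $\mathrm{CRD}(pn',n')$ are as you state, the covariance $-\tfrac{p(1-p)}{n'-1}$ is right, and the algebraic collapse via $\sum_{i\ne j}L_iL_j = (\sum_i L_i)^2 - \sum_i L_i^2$ does produce the factor $\tfrac{p(1-p)n'^2}{n'-1}\,\widehat{\Var}(L_i)$, which after dividing by $(pn')^2$ gives exactly $\tfrac{1-p}{p(n'-1)}\widehat{\Var}(L_i)$. Note, though, that the paper does not prove this lemma at all --- it is restated verbatim from \citet{yu2022PNAS} and used as a black box --- so there is no in-paper argument to compare against; your direct second-moment expansion is the standard finite-population (sampling-without-replacement) variance identity underlying the cited result, and it is a perfectly adequate self-contained derivation.
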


We use this lemma to derive the variance expression in the following theorem.
{
\begin{theorem} \label{thm:variance-deg1-crd} 
Under a potential outcomes model with $\beta=1$ and a two-stage Clustered CRD rollout design with clustering $\Pi$, estimator \eqref{eq:estimatorDefn} has variance
\[
    \tfrac{1-q}{pn - q} \cdot \widehat{\underset{j \in [n]}{\emph{Var}}}(L_j) \ + \ \tfrac{(q-p)(pn-1)}{p(n_c-1)(pn-q)} \cdot \widehat{\underset{\pi \in \Pi}{\emph{Var}}}\big(\bar{L}_\pi\big).
\]
\end{theorem}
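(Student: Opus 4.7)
The plan is to apply the law of total variance with the Stage~1 selection $\cU$ as the conditioning $\sigma$-algebra, and then reduce both resulting terms to applications of Lemma~\ref{lemma:CRD_PNAS}---once for the unit-level Stage~2 CRD restricted to $\cU$, and once for the cluster-level Stage~1 CRD over the $n_c$ clusters.

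First, I would observe that for $\beta=1$ the estimator \eqref{eq:estimatorDefn} collapses to $\widehat{\TTE} = \tfrac{1}{np}\sum_{j} L_j z_j^1$, and that conditional on $\cU$, the vector $(z_j^1)_{j\in\cU}$ is distributed as $\mathrm{CRD}(pn, |\cU|) = \mathrm{CRD}(q \cdot pn/q,\, pn/q)$. So Stage~2 is a CRD on $\cU$ with effective probability $q$. Lemma~\ref{lemma:CRD_PNAS} then gives directly
\[
    \Var_{\bz}\!\big(\widehat{\TTE}\,\big|\,\cU\big) \;=\; \tfrac{1-q}{pn-q}\cdot \widehat{\underset{j\in\cU}{\Var}}(L_j), \qquad \E_{\bz}\!\big[\widehat{\TTE}\,\big|\,\cU\big] \;=\; \tfrac{q}{np}\textstyle\sum_{j\in\cU} L_j.
\]

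Next, I would handle $\Var_\cU\!\big(\E_{\bz}[\widehat{\TTE}\mid\cU]\big)$ by regrouping by cluster. Writing $S_\pi := \sum_{j\in\pi}L_j$, one checks (using $\beta=1$, equal-sized clusters, and the definition of $\bar L_\pi$) that $S_\pi = (n/n_c)\bar L_\pi$. Since Stage~1 selects clusters via $\mathrm{CRD}(n_c p/q,\,n_c)$, applying Lemma~\ref{lemma:CRD_PNAS} at the cluster level with probability $p/q$ and population size $n_c$ yields
\[
    \Var_\cU\!\big(\E_{\bz}[\widehat{\TTE}\mid\cU]\big) \;=\; \tfrac{q^2}{n^2 p^2}\cdot \Var_\cU\!\big(\textstyle\sum_\pi S_\pi Z_\pi\big) \;=\; \tfrac{q-p}{p(n_c-1)}\cdot \widehat{\underset{\pi\in\Pi}{\Var}}(\bar L_\pi).
\]

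The main obstacle, and what will generate the $\tfrac{pn-1}{pn-q}$ factor in the final expression, is computing $\E_\cU[\widehat{\Var}_{j\in\cU}(L_j)]$. Expanding $\widehat{\Var}_{j\in\cU}(L_j) = \tfrac{q}{pn}\sum_{j\in\cU}L_j^2 - \big(\tfrac{q}{pn}\sum_{j\in\cU}L_j\big)^2$, the first summand is straightforward because $\Pr(j\in\cU)=p/q$ gives $\E_\cU[\tfrac{q}{pn}\sum_{j\in\cU}L_j^2]=\tfrac{1}{n}\sum_j L_j^2$. For the squared term, I would use the identity $\E[T^2]=\Var(T)+(\E T)^2$ with $T=\sum_{j\in\cU}L_j$, reusing the cluster-level variance just computed. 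This produces
\[
    \E_\cU\!\big[\widehat{\underset{j\in\cU}{\Var}}(L_j)\big] \;=\; \widehat{\underset{j\in[n]}{\Var}}(L_j) \;-\; \tfrac{q-p}{p(n_c-1)}\cdot\widehat{\underset{\pi\in\Pi}{\Var}}(\bar L_\pi).
\]

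Finally, I would recombine the two pieces from the law of total variance. The coefficient on $\widehat{\Var}(\bar L_\pi)$ becomes $\big(1-\tfrac{1-q}{pn-q}\big)\tfrac{q-p}{p(n_c-1)} = \tfrac{(pn-1)(q-p)}{(pn-q)p(n_c-1)}$, which matches the stated expression, and the coefficient on $\widehat{\Var}_{j\in[n]}(L_j)$ is just $\tfrac{1-q}{pn-q}$. Sanity checks: at $q=p$ the second term vanishes and we recover the single-stage CRD variance on $[n]$; at $q=1$ the first term vanishes and the variance comes entirely from Stage~1 cluster sampling, as expected.
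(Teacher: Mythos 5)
Your proposal is correct, and it follows the same skeleton as the paper's proof: decompose via the law of total variance conditioning on $\cU$, apply Lemma~\ref{lemma:CRD_PNAS} once at the unit level for the Stage-2 CRD on $\cU$ (giving the $\tfrac{1-q}{pn-q}\,\widehat{\Var}_{j\in\cU}(L_j)$ conditional variance) and once at the cluster level for the Stage-1 CRD over the $n_c$ clusters (giving $\tfrac{q-p}{p(n_c-1)}\,\widehat{\Var}_{\pi\in\Pi}(\bar{L}_\pi)$ for the variance of the conditional mean). Where you genuinely diverge is the evaluation of $\E_\cU\big[\widehat{\Var}_{j\in\cU}(L_j)\big]$, which is the bulk of the paper's proof: the paper expands this expectation directly in terms of the inclusion probabilities $\Pr(j\in\cU)$ and $\Pr(j,j'\in\cU)$, splitting pairs into same-cluster and cross-cluster cases and recombining through several lines of algebra, whereas you observe that the squared-sum term is exactly $\E[T^2]$ with $T=\E_{\bz}[\widehat{\TTE}\mid\cU]$ and invoke $\E[T^2]=\Var(T)+(\E T)^2$, reusing the cluster-level variance already computed. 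This shortcut yields the same intermediate identity $\E_\cU\big[\widehat{\Var}_{j\in\cU}(L_j)\big]=\widehat{\Var}_{j\in[n]}(L_j)-\tfrac{q-p}{p(n_c-1)}\widehat{\Var}_{\pi\in\Pi}(\bar{L}_\pi)$ with far less bookkeeping, and makes transparent where the $\tfrac{pn-1}{pn-q}$ factor comes from; the paper's direct expansion, on the other hand, exposes the within- versus across-cluster pair probabilities explicitly, which is the form one would need if the clusters were not treated symmetrically. Your final recombination and the sanity checks at $q=p$ and $q=1$ match the paper's statement exactly.
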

}
{The proof is in Appendix \ref{app:proofs}.
The \mcedit{$\widehat{\Var}_{j\in[n]}(L_j)$} term is the \mcreplace{population-wide}{empirical} variance of treatment effects \mcedit{across the population} and comes from applying Lemma \ref{lemma:CRD_PNAS} where the outer sum is over units $i\in[n]$. The $\widehat{\Var}(\bar{L}_\pi)$ term is the across-cluster variance of average cluster treatment effects and comes from applying the lemma where the outer sum is indexed over clusters $\pi \in \Pi$.
When $q=1$, the design is a simple cluster randomized design where every selected cluster in the first stage is treated in the second stage, and the variance expression simplifies to $\tfrac{1-p}{p(n_c-1)} \widehat{\Var}_{\pi\in\Pi}\big(\widehat{\E}_{j\in\pi}[L_j]\big)$. This aligns exactly with the cluster-randomized result from \citet{yu2022PNAS}. 
When $q=p$, our variance expression is $\tfrac{1-p}{p(n-1)} \widehat{\Var}_{j\in[n]}(L_j)$ and aligns exactly with the completely randomized design result from \citet{yu2022PNAS}.
When $q\in(p,1)$, some of the variance is attributable to the population-wide variance of influences (which we have no control over) and some of the variance is attributable to variance of average influences across clusters, which can be controlled with a clustering that enforces covariate balance in settings where covariates positively correlate with treatment effects. Note that cut edges do not play a role in the bias or variance when $\beta=1$.
}

    


\section{Experiments} \label{sec:experiments}

In this section, we use experiments on both synthetic and real-world networks to analyze the performance of the two-stage estimator.

\paragraph{Potential Outcomes Model.} We use synthetic potential outcomes that generalize the response model of \citet{ugander2023randomized} to incorporate $\beta$-order interactions; refer to Section 6.2 in their paper for an in-depth description of the design choices of this model. {The model incorporates homophily, degree-correlated effects, and $\beta$-order interference.} Unless otherwise noted, our choices for the parameter values \mcdelete{in this} agree with \citet{ugander2023randomized}. Refer to Appendix \ref{app:networks} for further details about the model and parameters. 

\paragraph{Networks.}

The synthetic networks that we consider are {$\sqrt{n} \times \sqrt{n}$ lattice graphs}. We include all self-loops in these networks.

In addition, we consider three real-world networks: an email communication network \citep{leskovec2007graph}, a social network \citep{network_repo}, and a co-purchase network in an online marketplace \citep{leskovec2007dynamics}; details of these networks can be found in Appendix~\ref{app:networks}. Each dataset includes a network and a set of feature labels $F$ assigned to its vertices. In our experiments, we sometimes use these features to cluster the networks.

\paragraph{Running the Experiments.} The source code for our experiments and plots found within our manuscript is available in the supplementary materials. All of our experiments were run on a MacBook Pro with an M3 chip and 16GB of memory and ran in under two hours parallelized across its 8 cores.

\subsection{Comparison with other estimators}
\label{ssec:experiment-compare-est}

We first empirically explore the performance of our two-stage approach without clustering. We compare the bias and variance of the following estimators: 

\begin{itemize}
    \item \textsf{2-Stage}, the Polynomial Interpolation (PI) estimator under a unit two stage rollout (with no clustering) and $q=0.5$
    \item Two difference-in-means style estimators. The classical \textsf{DM} estimator, and a thresholded version, \textsf{DM}($\lambda$) that only considers individuals for which a $\lambda$-proportion of their neighborhood shares their treatment assignment.
    \item The \textsf{H\'{a}jek} estimator, an inverse probability weighted-style estimator.
    \item The \textsf{PI} estimator under a {one-stage $\textrm{CRD}(pn,n)$ rollout over $\beta+1$ time steps \mcedit{from \citet{cortez2022neurips}}.}
    \item {The two-stage PI estimator with \textsf{q=1}, equivalent to the PI estimator under a one-stage $\textrm{CRD}(pn,n)$ rollout over $2$ time steps (with no clustering).}
\end{itemize}

The exact formulas for these estimators can be found in Appendix \ref{app:other_ests}. Although the Horvitz-Thompson estimator is also considered a baseline, due to its high variance, it consistently performs worse than all the other estimators considered so it is omitted. Of the non-PI estimators, only the (unthresholded) difference in means does not require knowledge of the underlying interference network. \cydelete{Only the Horvitz-Thompson estimator is unbiased.} 

Figure~\ref{fig:amazon} shows the bias and standard deviation of these estimators as we vary the treatment budget $p$. The column faceting distinguishes between the cases of $\beta=1$, $\beta=2$, and $\beta=3$. While the difference in means estimators have low variance, their bias leads to higher mean squared error (MSE) than the two-stage estimator. 
\begin{figure}
    \centering
    \includegraphics[width=0.5\textwidth]{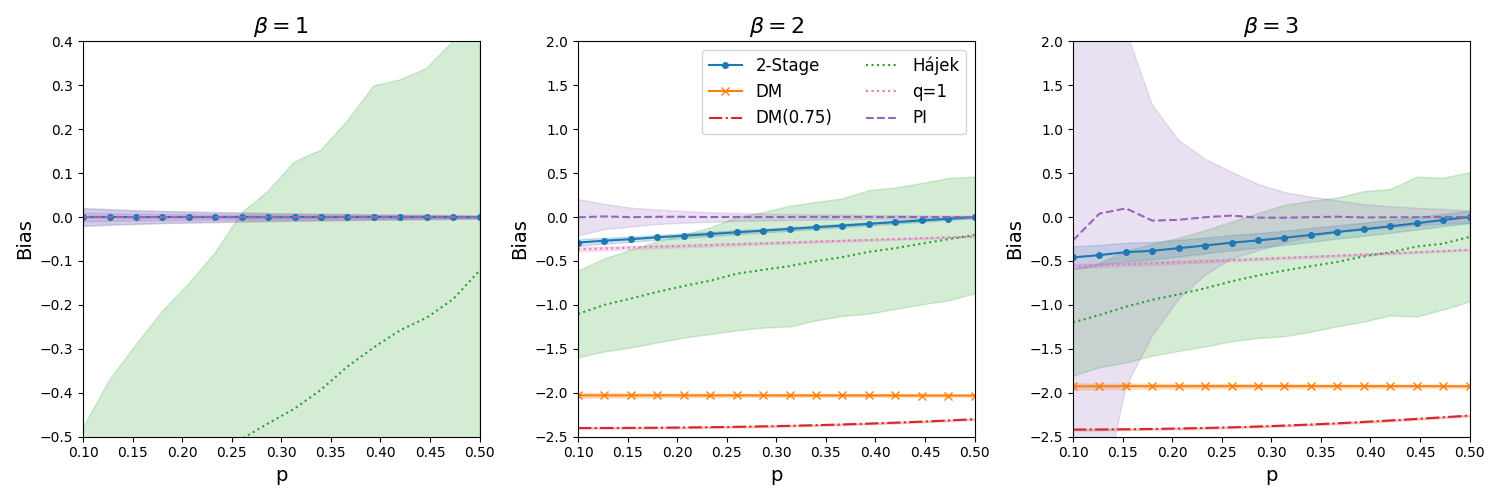} 
    \caption{Performance of different estimators on the \textsc{Amazon} network for various values of $p$. The bold line indicates the mean over 1000 replications. The shading indicates the experimental standard deviation, calculated by taking the square root of the experimental variance over all replications. The \textsf{2-Stage} estimator uses $q=0.5$ and does not utilize clustering. {Note the scaling of the $y$-axes are not the same across $\beta$.}}
    \label{fig:amazon}
\end{figure}
{In the $\beta=1$ case, we have zoomed in on the scaling since the variance of the polynomial interpolation estimators is very small. In this setting they are unbiased. \textsf{PI} is equivalent to \textsf{q=1} in this case so they perfectly overlap and have smaller variance than \textsf{2-Stage}. This suggests that when you have a truly linear model, the two-stage approach may not improve over the one-stage approach. The remaining estimators either have much worse variance or much worse bias, and thus they do not show up on the plot.}

{In the $\beta=2$ case, the difference in means estimators are biased with very low variance. The \textsf{H\'{a}jek} estimator has bias that decreases as $p$ increases, but significantly higher variance than all other approaches. Between the three polynomial interpolation-based estimators, \textsf{PI} is unbiased with slightly larger variance for small values of $p$, while \textsf{2-Stage} has bias that decreases as $p$ increases and \textsf{q=1} has bias that remains about the same regardless of $p$. In this case, the MSE of these estimators is relatively similar, with \textsf{PI} doing slightly better than \textsf{2-Stage} for smaller values of $p$, again suggesting this is not a setting where the two-stage approach \mcedit{necessarily} does better. }

{In the $\beta=3$ case, we can see the significant variance reduction of the two-stage estimator over \textsf{PI}, which comes at the expense of a smaller introduction of bias relative to the remaining estimators. Due to the richer model, we see that the one-stage approach (\textsf{PI}) has an extremely high variance for smaller values of $p$, much larger than the bias incurred by the two-stage approach.} 

Overall, the performance of the \textsf{2-Stage} and \textsf{PI} estimators is similar for most of the parameter landscape {when $\beta=1$ or $\beta=2$}, but the variance reduction of the \textsf{2-Stage} estimator for small $p$ and $\beta = 3$ results in a lower MSE despite the additional bias. {This makes sense because we only expect a large error reduction for richer models and small treatment probabilities.}

These results highlight a setting where the two-stage approach improves over the one-stage approach, even without network information, as these experiments did not use any clustering. Plots of the MSE and the two other network datasets can be found in Appendix~\ref{app:experiments}.


\subsection{Clustering effect in two-stage estimator}

In this section we conduct experiments to empirically explore the impact of clustering. 

\paragraph{Lattice.} In Figure \ref{fig:compare_lattice_clusterings}, we compare the MSE of the 2-Stage estimator under two clusterings and no clustering on a $100\times100$ lattice. The \textsf{Coarse} clustering is a $10\times10$ grid on top of the lattice; there are 100 clusters with 100 people in each cluster. The \textsf{Fine} clustering is a $2\times2$ grid on top of the lattice; there are 2500 clusters with 4 people in each cluster. Table \ref{tab:clustering_metrics_lattice} displays some metrics for these clusterings: $\widehat{\Var}\big(\bar{L}_\pi\big)$, $C(\delta(\Pi))$, and the number of cut edges. 

\begin{table}[h]
        \centering
        \caption{Clustering Metrics for Figure \ref{fig:compare_lattice_clusterings}.}
        \label{tab:clustering_metrics_lattice}
        
        \begin{tabular}{c|c|c|c}
            Clustering & $\widehat{\Var}\big(\bar{L}_\pi\big)$ & $C(\delta(\Pi))$ & Cut Edges \\
            \hline
            \textsf{Coarse} &  0.0002 & 0.1229 & 3600 \\
            \textsf{Fine} & 0.002 & 0.5703 & 19600
        \end{tabular}
\end{table}

\begin{figure}
    \centering
    \includegraphics[width=0.45\textwidth]{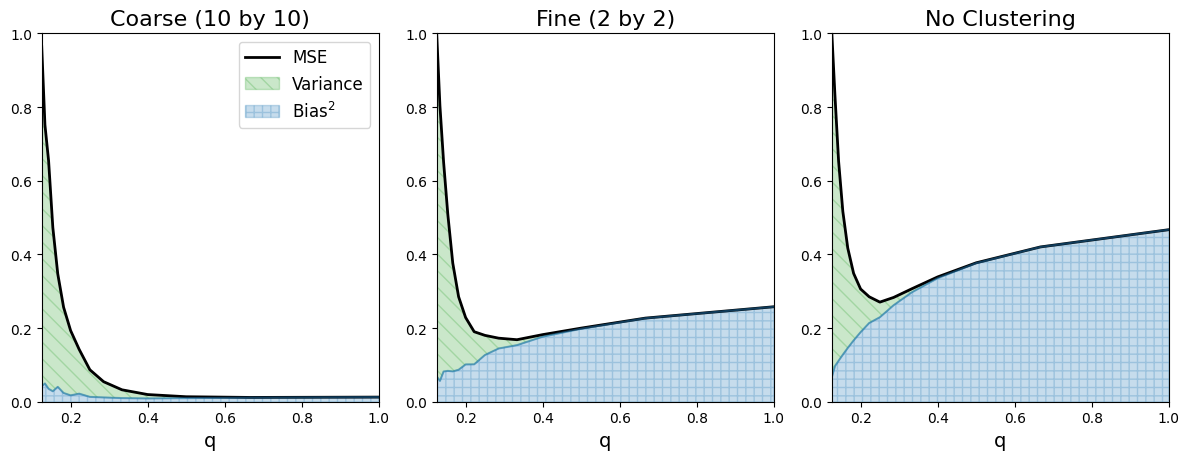} 
    \caption{Mean Squared Error of the Two-Stage TTE estimator for two clusterings of a $100\times100$ lattice graph, compared with no clustering, for a $\beta$-degree potential outcomes model with $\beta=3$. Even with no network knowledge, we see a drastic decrease in MSE even at the cost of incurring bias. 
    }
    \label{fig:compare_lattice_clusterings}
\end{figure}

{
In Figure \ref{fig:compare_lattice_clusterings}, we vary $q$ (the treatment probability for individuals selected in the first stage) on the $y$-axis and plot the MSE, with the different shading corresponding to the variance and squared bias components. The leftmost endpoint corresponds to $q=p=0.15$, equivalently the one-stage setting from \cite{cortez2022neurips}. Since the treatment budget $p$ is small and $\beta=3$ (indicating a richer model), the left side of each plot exhibits high variance and low bias. As $q$ increases, the variance decreases but the bias increases due to cut edges. Clustering reduces bias by reducing the number of cut edges. The coarse clustering in the left plot drastically decreases the error, especially as $q$ approaches $1$. The middle plot is a finer clustering, and results in much more bias as $q$ approaches $1$. 
Table \ref{tab:clustering_metrics_lattice} helps elucidate the difference in performance. The fine clustering cuts five times more edges than the coarse clustering, resulting in a cut effect that is about five times larger. Finally, the rightmost plot shows the MSE of the two-stage estimator under no clustering, i.e. a unit CRD 2-stage rollout, and incurs the largest amount of bias. Overall, the error for $q>p$ is smaller than at $q=p$ across all plots, showing settings where a two-stage design leads to improvement over a one-stage design.
}

\paragraph{Real-world Networks.}


We compare two methods of clustering the real-world networks. In the clustering with \textsf{Full Graph Knowledge}, we cluster the true underlying graph using the \textsc{METIS} clustering library by \citet{karypis1998fast}. In the clustering with \textsf{Covariate Knowledge}, clusters are based on features. When each vertex is assigned to one feature, we use these assignments as the clustering. When vertices may have multiple features we form a feature graph --- a weighted graph, where the weight of edge $(i,j)$ is the number of feature labels shared by $i$ and $j$ --- and cluster this feature graph using \textsc{METIS}.

\begin{figure}
    \centering
    \includegraphics[width=0.45\textwidth]{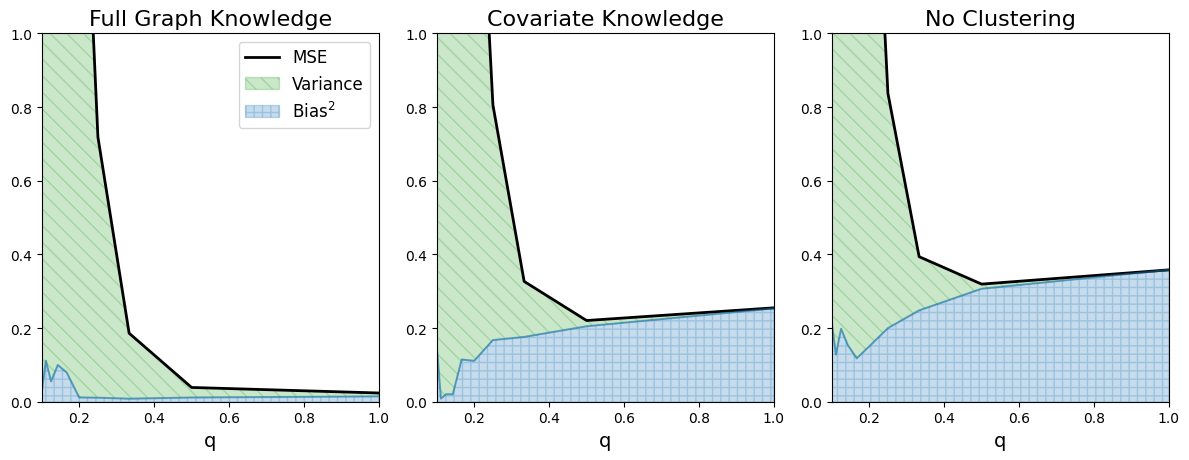}
    \caption{Mean Squared Error of the Two-Stage TTE estimator for two clusterings (with 250 clusters) of the \textsc{Amazon} network, compared with no clustering, for a $\beta$-degree potential outcomes model with $\beta=3$. Even with no network knowledge, we see a drastic decrease in MSE even at the cost of incurring bias.
    }
    \label{fig:compare_amazon_clusterings}
\end{figure}

We highlight the Amazon network here, but additional experiments with the other networks are in Appendix \ref{app:experiments}. Figure~\ref{fig:compare_amazon_clusterings} 
depicts the results of such an experiment run on a co-purchasing network of Amazon products.  We compare the two clusterings of this graph as described above, each time partitioning the network into 250 parts, against no clustering.
To generate these plots, we compute the experimental bias, sampling variance, and total variance,  over $1000$ replications. 
Here, we vary $q$ from $q=p=0.1$ to $q=1.$
In the first plot, showing the MSE of the estimator when the clustering uses full network knowledge, the MSE is minimized at $q=1$ with value 0.024. In the second plot, showing the MSE of the estimator when the clustering only uses covariate knowledge, the MSE is minimized around $q=0.5$ with value 0.22. 
{Table \ref{tab:clustering_metrics_amazon} gives insight into the difference in performance under these clusterings. The clustering with covariate knowledge cuts about five times as many edges as the clustering with full graph knowledge, resulting in a cut effect that is about 5 times larger.}

\begin{table}[h]
        \centering
        \caption{Clustering Metrics for Figure \ref{fig:compare_amazon_clusterings}.}
        \label{tab:clustering_metrics_amazon}
         
        \begin{tabular}{c|c|c|c}
            Cluster & $\widehat{\Var}\big(\bar{L}_\pi\big)$ & $C(\delta(\Pi))$ & Cuts \\
            \hline
            \textsf{Full} & 0.2488 & 0.1258 & 7670\\
            \textsf{Covariate} & 0.0426 & 0.5436 & 41243\\
        \end{tabular}
\end{table}

In the final plot,  showing the MSE of the estimator under a two-stage unit CRD design, the MSE is minimized around $q=0.5$ with value 0.32. Recall that the leftmost endpoint of each plot corresponds to the error when $q=p$, i.e. under the one-stage rollout. Although a clustering with full network knowledge achieves the best overall performance, we see a significant error reduction over a one-stage even for a two-stage unit CRD design. Thus, using the two-stage estimator may reduce MSE (versus a single-stage rollout) even without a clustering or network knowledge. 

\begin{table}[h]
    \centering
    \caption{Clustering Metrics for Amazon Network}
    \label{tab:clustering_metrics_amazon_nc}
    
    \begin{tabular}{c|c|c|c|c|c}
        Cluster & $n_c$ & $\widehat{\Var}\big(\bar{L}_\pi\big)$ & $C(\delta(\Pi))$ & $q_{\min}$ & MSE \\
        \hline
        \textsf{Full} & 50 & 0.156059 & 0.088226 & 1 & 0.035\\
        \textsf{Full} & 100 & 0.187048 & 0.102260 & 1 & 0.028\\
        \textsf{Full} & 250 & 0.248759 & 0.125772 & 1 & 0.024\\
        \textsf{Covariate} & 50 & 0.019543 & 0.517855 & 0.5 & 0.211\\
        \textsf{Covariate} & 100 & 0.025207 & 0.536876 & 0.5 & 0.228\\
        \textsf{Covariate} & 250 & 0.042644 & 0.543623 & 0.5 & 0.220
    \end{tabular}
\end{table}


{In Table \ref{tab:clustering_metrics_amazon_nc}, we record some metrics of clusterings of different sizes computed with full network or covariate knowledge.
These experiments use a $\beta$-degree potential outcomes model with $\beta=3$. The parameter $n_c$ indicates the number of clusters. In each row, $q_{\min}$ is the value of $q$ that minimizes the MSE and the column MSE contains that value. We computed the minimum empirically. Generally, having more clusters corresponds to a higher across-cluster variance of average cluster influences and a higher cut effect. However, regardless of cluster size, the MSE is still drastically decreased from 38 (at $q=p)$ to about 0.2 under a covariate-based clustering (at $q=0.5$) and to about 0.02 with a full graph knowledge-based clustering (at $q=1$). Clustering with full knowledge has higher across-cluster variance of average cluster influences but smaller cut effect compared with clustering with covariate knowledge. This reminds us that there is a tension between cut edges and covariate balance. While a common clustering objective is to minimize cut edges, there may be settings where enforcing some covariate balance may be wise if there is homophily. This is because if there is strong homophily, edges are correlated with covariates. If there is reason to believe these covariates are highly correlated with potential outcomes, then minimizing cut edges might minimize the cut effect but maximize the variance of cluster influences.}

\mcedit{Our theoretical and experimental results explore settings with equal-sized clusters, such as in \cite{eckles2017design,brennan2022cluster,candogan2024correlated}, which may be too difficult a constraint to meet in many practical settings. In theory, unequal-size clusters should not affect bias but will affect the variance of the estimates. For the experiments on real-world networks, we use the METIS clustering library, which can only do equal-size clusters. Exploring the performance under unequal-sized clusters is a practical direction for future work. 
}


\subsubsection*{Acknowledgements}
 We gratefully acknowledge financial support from the National Science Foundation grants CCF-2337796 and CNS-1955997, the National Science Foundation Graduate Research Fellowship grant DGE-1650441, and AFOSR grant FA9550-23-1-0301.

\bibliography{refs}

\section*{Checklist}

 \begin{enumerate}
 
 \item For all models and algorithms presented, check if you include:
 \begin{enumerate}
   \item A clear description of the mathematical setting, assumptions, algorithm, and/or model. [Yes, see Section \ref{sec:preliminaries}.]
   \item An analysis of the properties and complexity (time, space, sample size) of any algorithm. [Not Applicable]
   \item (Optional) Anonymized source code, with specification of all dependencies, including external libraries. [Yes, in supplementary material.]
 \end{enumerate}

 \item For any theoretical claim, check if you include:
 \begin{enumerate}
   \item Statements of the full set of assumptions of all theoretical results. [Yes]
   \item Complete proofs of all theoretical results. [Yes, in Appendix]
   \item Clear explanations of any assumptions. [Yes]     
 \end{enumerate}

 \item For all figures and tables that present empirical results, check if you include:
 \begin{enumerate}
   \item The code, data, and instructions needed to reproduce the main experimental results (either in the supplemental material or as a URL). [Yes, in supplementary material and as URL in camera-ready version.]
   \item All the training details (e.g., data splits, hyperparameters, how they were chosen). [Yes, in Appendix and in Section \ref{sec:experiments}]
         \item A clear definition of the specific measure or statistics and error bars (e.g., with respect to the random seed after running experiments multiple times). [Yes]
         \item A description of the computing infrastructure used. (e.g., type of GPUs, internal cluster, or cloud provider). [Yes]
 \end{enumerate}

 \item If you are using existing assets (e.g., code, data, models) or curating/releasing new assets, check if you include:
 \begin{enumerate}
   \item Citations of the creator If your work uses existing assets. [Yes]
   \item The license information of the assets, if applicable. [Not Applicable]
   \item New assets either in the supplemental material or as a URL, if applicable. [Yes]
   \item Information about consent from data providers/curators. [Yes]
   \item Discussion of sensible content if applicable, e.g., personally identifiable information or offensive content. [Not Applicable]
 \end{enumerate}

 \item If you used crowdsourcing or conducted research with human subjects, check if you include:
 \begin{enumerate}
   \item The full text of instructions given to participants and screenshots. [Not Applicable]
   \item Descriptions of potential participant risks, with links to Institutional Review Board (IRB) approvals if applicable. [Not Applicable]
   \item The estimated hourly wage paid to participants and the total amount spent on participant compensation. [Not Applicable]
 \end{enumerate}

 \end{enumerate}

\newpage
\onecolumn
\appendix
\aistatstitle{Supplementary Materials}

\section{PROOFS} \label{app:proofs}

\subsection*{Theorem~\ref{thm:bias}}
\label{ssec:bias-proof}
\begin{proof} 
    We use the Law of Total Expectation, conditioning on the set of individuals $\cU$ selected in the first stage and reasoning about the randomness from the treatment assignments $\bz$. We have,
    \begin{align}
        \E_{\bz} \big[ \hTTE \:\big|\: \cU \big] 
        &= \tfrac{q}{np} \sum\limits_{t=0}^{\beta} h_{t,q} \sum\limits_{i=1}^{n} \sum_{\cS \in \cS_i^\beta} c_{i,\cS} \cdot \E_{\bz} \Big[ \prod_{j \in \cS} z_j^t \:\big|\: \cU \Big] \nn \\
        &= \tfrac{q}{np} \sum\limits_{t=0}^{\beta} h_{t,q} \sum\limits_{i=1}^{n} \sum_{\cS \in \cS_i^\beta} c_{i,\cS} \cdot \Ind(\cS \subseteq \cU) \cdot \Big[ \tfrac{tpn/\beta}{|\cU|} \Big]_{|\cS|} \nn \\
        &= \tfrac{q}{np} \sum\limits_{i=1}^{n} \sum_{\cS \in \cS_i^\beta} c_{i,\cS} \cdot \Ind(\cS \subseteq \cU) \sum\limits_{t=0}^{\beta} h_{t,q} \Big[ \tfrac{tpn/\beta}{|\cU|} \Big]_{|\cS|} \nn \\
        &= \tfrac{q}{np} \sum\limits_{i=1}^{n} \sum_{\cS \in \cS_i^\beta} c_{i,\cS} \cdot \Ind \big( \cS \subseteq \cU \big) \Big( 1^{|\cS|} - 0^{|\cS|} \Big) \nn \\
        &= \tfrac{q}{np} \sum\limits_{i=1}^{n} \sum_{\cS \in \cS_i^\beta \setminus \varnothing} c_{i,\cS} \cdot \Ind \big( \cS \subseteq \cU \big) \label{eq:e_tte_cond_u}
    \end{align}
    \meedit{Here, the fourth line follows from the properties of Lagrange interpolation. Note that $h_{t,q} = \ell_{t,\bx}(1) - \ell_{t,\bx}(0)$, where $\ell_{t,\bx}$ is the $t$'th Lagrange basis polynomial with evaluation points $\bx = \big( \tfrac{tq}{\beta} \big)_{t \in 0, \hdots, \beta} = \big( \tfrac{tpn/\beta}{|\cU|} \big)_{t \in 0, \hdots, \beta}$. Thus, for any polynomial $f(x)$ with degree at most $\beta$,
    \[
        \sum_{t=0}^{\beta} h_{t,q} \cdot f\big(\tfrac{tq}{\beta}\big) = f(1) - f(0).
    \]
    In this case, we let $f(x) = \Big[\frac{x|\cU|}{|\cU|}\Big]_{|\cS|}$, to find that 
    \[
        \sum\limits_{t=0}^{\beta} h_{t,q} \Big[ \tfrac{tpn/\beta}{|\cU|} \Big]_{|\cS|} = [1]_{|\cS|} - [0]_{|\cS|} = 1^{|\cS|} - 0^{|\cS|}. 
    \]}
    Now, taking the expectation over the randomness in $\cU$, we obtain
    \begin{align*}
        \E \big[ \hTTE \big] &= \E_{\cU} \Big[ \E_{\bz} \big[ \hTTE \:\big|\: \cU \big] \Big] \\
        &= \tfrac{1}{n} \sum\limits_{i=1}^{n} \sum_{\cS \in \cS_i^\beta \setminus \varnothing} c_{i,\cS} \cdot \tfrac{q}{p} \cdot \Pr(\cS \subseteq \cU).
    \end{align*}
    The bias expression in the theorem statement follows from the expression for $\TTE$ given in \eqref{eq:tte_coeff}.
\end{proof}

\subsection*{Theorem~\ref{thm:var_bounds}}

We also use the following algebraic lemma.

\begin{lemma} \label{lem:crd_interp_bound}
    For all $0 < k \leq \ncU \leq 1$, 
    \[
        |h_{t,q}| = \prod_{\substack{s=0 \\ s\not=t}}^{\beta} \tfrac{\beta/q-s}{t-s} - \prod_{\substack{s=0 \\ s\not=t}}^{\beta} \tfrac{-s}{t-s}
        \:\leq\: \big( \tfrac{\beta}{q} \big)^\beta.
    \]
\end{lemma}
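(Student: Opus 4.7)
The plan is to recognize $h_{t,q}$ as $\ell_t(1) - \ell_t(0)$, where $\ell_t$ is the $t$-th Lagrange basis polynomial for the nodes $x_s = sq/\beta$. Letting $u := \beta/q$, the hypothesis $q \leq 1$ gives $u \geq \beta \geq 1$, which will be the crucial inequality throughout. The key structural observation is that the second product, $\prod_{s\neq t}\tfrac{-s}{t-s}$, contains the factor $\tfrac{0}{t}$ whenever $t \geq 1$, and therefore vanishes. This naturally splits the proof into two cases.

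For the case $t \geq 1$, the second product drops out, so $|h_{t,q}| = \tfrac{\prod_{s\neq t}|u-s|}{\prod_{s\neq t}|t-s|}$. Each numerator factor satisfies $0 \leq u - s \leq u$ since $s \in \{0,\ldots,\beta\}$ and $u \geq \beta$, giving $\prod_{s\neq t}|u-s| \leq u^{\beta}$. The denominator simplifies to $t!(\beta-t)! \geq 1$, so $|h_{t,q}| \leq u^{\beta} = (\beta/q)^{\beta}$ is immediate.

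The case $t = 0$ is where I expect the main difficulty, since both products are now nonzero. The second product equals $\prod_{s=1}^{\beta}\tfrac{-s}{-s}=1$, while the first has absolute value $\tfrac{1}{\beta!}\prod_{s=1}^{\beta}(u-s)$. The naive bound $u - s \leq u$ at this step yields $\tfrac{u^\beta}{\beta!} + 1$, which fails to be $\leq u^\beta$ for small $\beta$ (e.g. $\beta=1$). The fix is to use the sharper estimate $u - s \leq u - 1$ for $s \geq 1$, giving $\prod_{s=1}^{\beta}(u-s) \leq (u-1)^{\beta}$. Then the triangle inequality yields
\[
    |h_{0,q}| \;\leq\; 1 + \tfrac{(u-1)^{\beta}}{\beta!}.
\]

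To close the argument, it suffices to show $u^{\beta} - (u-1)^{\beta} \geq 1$ for all $u \geq 1$, since then $u^{\beta} \geq 1 + (u-1)^{\beta} \geq 1 + \tfrac{(u-1)^{\beta}}{\beta!}$ using $\beta! \geq 1$. Setting $g(u) := u^{\beta} - (u-1)^{\beta}$, we have $g(1) = 1$, and $g'(u) = \beta\bigl(u^{\beta-1} - (u-1)^{\beta-1}\bigr) \geq 0$ on $[1,\infty)$, so $g(u) \geq 1$ throughout. Combining the two cases gives the claimed bound $|h_{t,q}| \leq (\beta/q)^{\beta}$ for all $t \in \{0,\ldots,\beta\}$.
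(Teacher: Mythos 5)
Your proof is correct and takes essentially the same approach as the paper: split on $t \geq 1$ versus $t = 0$, observe that the second product vanishes when $t \geq 1$ (the $s=0$ factor is zero), bound each numerator factor by $\beta/q$ and the denominator $t!(\beta-t)!$ below by $1$, and handle $t=0$ with the triangle inequality. The only difference is in how the $+1$ is absorbed at $t=0$: you use the slightly sharper bound $u-s \leq u-1$ together with $u^{\beta} \geq 1 + (u-1)^{\beta}$, which treats all $\beta \geq 1$ uniformly, whereas the paper disposes of $\beta = 1$ separately (where equality holds) and for $\beta \geq 2$ absorbs the $+1$ via $1 \leq \tfrac{1}{2}(\beta/q)^{\beta}$.
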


\begin{proof}
    When $\beta = 1$, 
    $|h_{0,q}| = |h_{1,q}| = \tfrac{1}{q}$, so the inequality holds (with equality). Thus, we can restrict our attention to $\beta \geq 2$, for which we consider in two cases. First, if $t \geq 1$, we have
    \begin{equation*}
        \big| h_{t,q} \big|
        = \Big| \prod_{\substack{s = 0 \\ s \not= t}}^{\beta} \tfrac{\beta/q-s}{t-s} \Big|
        \leq \big(\tfrac{\beta}{q}\big)^\beta.
    \end{equation*}
    The equality uses the definition of $h_{t,q}$, and the inequality upper bounds the numerator of each factor with $\beta/q$ and lower bounds the denominator of each factor by $1$. When $t=0$, we apply the triangle inequality to conclude that
    \begin{equation*}
        \big| h_{0,q} \big|
        = \Big| \prod_{s = 1}^{\beta} \tfrac{\beta/q-s}{-s} - 1 \Big|
        \leq \prod_{s=1}^{\beta} \tfrac{\beta}{sq} + 1
        = \tfrac{1}{\beta!} \Big(\tfrac{\beta}{q}\Big)^\beta + 1.
    \end{equation*}
    Since $\beta \geq 2$ and $q \leq 1$, we must have $1 \leq \tfrac{1}{2} \big( \tfrac{\beta}{q} \big)^\beta$. Thus we can upper-bound this last expression by
    \[
        \tfrac{1}{\beta!} \Big(\tfrac{\beta}{q}\Big)^\beta + \tfrac{1}{2} \big( \tfrac{\beta}{q} \big)^\beta = \Big( \tfrac{1}{\beta!} + \tfrac{1}{2} \Big) \cdot \Big(\tfrac{\beta}{q}\Big)^\beta \leq \Big(\tfrac{\beta}{q}\Big)^\beta.
    \]
\end{proof}

We also prove a slightly stronger version of Theorem 3 from \cite{cortez2022neurips} with the constants specified. It first relies on a slightly modified version of Lemma 8 from \cite{cortez2022neurips}. 
\begin{lemma}\label{lem:crazy_bound}
    For any $x \in (0,1]$ and any constants $a,b \in \mathbb{N}$ such that $xn \geq \sqrt{2} ab + b - 1$, 
    \[
       \Bigg| \frac{\Big[ \tfrac{xn-a}{n-a} \Big]_b}{\Big[ \tfrac{xn}{n} \Big]_b} - 1 \Bigg| \leq \tfrac{2 ab}{xn-b + 1},
    \]
\end{lemma}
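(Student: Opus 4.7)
The plan is to write the ratio as a product of $b$ factors, one for each $i \in \{0, \ldots, b-1\}$:
\[
\frac{[\tfrac{xn-a}{n-a}]_b}{[\tfrac{xn}{n}]_b} = \prod_{i=0}^{b-1} \frac{(xn-a-i)(n-i)}{(n-a-i)(xn-i)}.
\]
A direct cancellation of the numerator minus denominator of each factor (most cleanly seen by setting $A = xn-i$, $B = n-i$ and computing $(A-a)B - (B-a)A = a(A-B) = -an(1-x)$) shows that the deviation from $1$ is the same for every $i$, so each factor equals $1 - \epsilon_i$ with $\epsilon_i = \tfrac{an(1-x)}{(n-a-i)(xn-i)}$.

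The next step is to confirm that each $\epsilon_i$ lies in $[0, 1]$, so that each factor lies in $[0,1]$ and the product is at most $1$. Nonnegativity is immediate from $x \leq 1$; the upper bound $\epsilon_i \leq 1$ reduces to $xn - a - i \geq 0$ for $i \leq b-1$, which follows from the hypothesis since $\sqrt{2}ab + b - 1 \geq a + b - 1$ when $b \geq 1$. Given this, the absolute value in the statement equals $1 - \prod_i(1-\epsilon_i)$, and the Weierstrass product inequality bounds it by $\sum_i \epsilon_i$. Because the denominator $(n-a-i)(xn-i)$ decreases in $i$, the worst $\epsilon_i$ occurs at $i = b-1$, giving the uniform bound
\[
\sum_{i=0}^{b-1} \epsilon_i \;\leq\; \frac{abn(1-x)}{(n-a-b+1)(xn-b+1)}.
\]

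The final step is to show this upper bound is at most $\tfrac{2ab}{xn-b+1}$, which is equivalent to the inequality $n(1+x) \geq 2(a+b-1)$. I expect this to be the main obstacle: the hypothesis $xn \geq \sqrt{2}ab + b - 1$ contains a somewhat unusual $\sqrt{2}$ constant whose role is not immediately transparent. The resolution is to invoke the hypothesis twice — once directly for $xn$ and once via $n \geq xn$ for $n$ — obtaining $n + xn \geq 2(\sqrt{2}ab + b - 1)$, and then to use $2\sqrt{2}ab \geq 2a$ (valid since $b \geq 1$) to conclude. The $\sqrt{2}$ in the hypothesis is thus exactly the constant needed to make this double application produce enough slack to cover the $2a$ term on the right-hand side.
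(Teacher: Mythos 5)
Your proof is correct, but it takes a genuinely different route from the paper's. Both arguments begin identically, writing the ratio as $\prod_{i=0}^{b-1}\tfrac{(xn-a-i)(n-i)}{(n-a-i)(xn-i)}$ and observing it is at most $1$; they diverge after that. The paper lower-bounds each factor crudely by $1-\tfrac{a}{xn-b+1}$ (discarding the $\geq 1$ factor $1+\tfrac{a}{n-a-i}$), then expands $\big(1-\tfrac{a}{xn-b+1}\big)^b$ binomially, keeps the odd terms, bounds $\binom{b}{j}y^j\leq (by)^j$, and sums a geometric series whose ratio is at most $\tfrac12$ precisely because the hypothesis gives $\tfrac{ab}{xn-b+1}\leq\tfrac{1}{\sqrt2}$ --- that is where the $\sqrt2$ is actually consumed. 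You instead compute each factor's deficit exactly, $\epsilon_i=\tfrac{an(1-x)}{(n-a-i)(xn-i)}$, apply the Weierstrass product inequality $1-\prod_i(1-\epsilon_i)\leq\sum_i\epsilon_i$ (valid since each $\epsilon_i\in[0,1]$), take the worst index $i=b-1$, and reduce the target to $n(1+x)\geq 2(a+b-1)$. Your route is more elementary (no binomial expansion or geometric series), keeps the $(1-x)$ factor so your intermediate bound even vanishes as $x\to 1$, and in fact uses strictly less of the hypothesis: $n(1+x)\geq 2xn\geq 2(a+b-1)$ already follows from $xn\geq a+b-1$, so your closing remark that the $\sqrt2$ is "exactly the constant needed" is not quite right --- your argument proves the lemma under a weaker assumption, whereas the paper's argument genuinely needs the $\sqrt2$ slack. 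Two trivial housekeeping points: the degenerate cases $a=0$ or $b=0$ make the left-hand side zero (the paper disposes of them explicitly before assuming $a,b>0$), and positivity of the denominators $n-a-i$ and $xn-i$ for $i\leq b-1$ is guaranteed by the hypothesis (indeed strictly, since $\sqrt2\,ab>a$); both are worth a sentence but neither is a gap.
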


\begin{proof}
First, let us note that when $a=0$ or $b=0$, both sides of this inequality simplify to $0$, so it holds with equality. Thus, we assume throughout the rest of the proof that $a,b > 0$. Note that our assumption $xn \geq \sqrt{2}ab+b-1$ with $x \leq 1$ implies that $n \geq a+b-1$.

Now, given any any $i \in \{0, \hdots, b-1\}$,
\[
    \frac{xn-a-i}{n-a-i} \leq \frac{xn-i}{n-i} \hspace{10pt} \Rightarrow \hspace{10pt} \frac{\Big[ \tfrac{xn-a}{n-a} \Big]_b}{\Big[ \tfrac{xn}{n} \Big]_b} \leq 1.
\]
As a result, expanding the bracket notation, we have, 
    \begin{align*}
        \Bigg| \frac{\Big[ \tfrac{xn-a}{n-a} \Big]_b}{\Big[ \tfrac{xn}{n} \Big]_b} - 1 \Bigg| 
        &= 1 - \prod_{i=0}^{b-1} \Big( \frac{xn-a-i}{xn-i} \Big) \Big( \frac{n-i}{n-a-i} \Big) \\
        &= 1 -  \prod_{i=0}^{b-1} \Big( 1 - \frac{a}{xn-i} \Big) \underbrace{\Big( 1 + \frac{a}{n-a-i} \Big)}_{\geq 1}\\
        &\leq 1 -  \prod_{i=0}^{b-1} \Big( 1 - \frac{a}{xn-b+1} \Big) \tag{$i \leq b-1$} \\
        &= - \sum_{j=1}^{b} \binom{b}{j} \Big(- \frac{a}{(xn-b+1)} \Big)^j \tag{binomial expansion} \\
        &\leq \sum_{j=1}^{b} \binom{b}{j} \Big(\frac{a}{(xn-b+1)}\Big)^j \cdot \Ind(j \text{ is odd}) \\
        &\leq \Big(\frac{ a b}{xn-b+1} \Big) \sum_{j=0}^{\lfloor (b-1)/2 \rfloor}\Big(\frac{ a b}{xn-b+1} \Big)^{2j} \\
        &\leq \Big(\frac{ a b}{xn-b+1} \Big) \sum_{j=0}^{\lfloor (b-1)/2 \rfloor}\Big(\frac{1}{\sqrt{2}} \Big)^{2j} \tag{$xn \geq \sqrt{2} ab + b - 1$}\\
        &\leq \frac{2 a b}{xn-b+1}. \tag{geometric series with factor $\frac{1}{2}$}
    \end{align*}
\end{proof}

We use this to lemma to give an upper bound on the covariance of two sets being treated under a CRD rollout design with $\tfrac{ptn}{\beta}$ individuals treated in round $t$ for each $t \in \{0,\hdots,\beta\}$.
\begin{lemma}
If $\frac{pt'n}{\beta} \geq 2\beta^2 + \beta - 1$, then for $t \leq t'$ and $\cS \cap \cS' = \emptyset$ with $|\cS|, |\cS'| \geq 1$, it follows that 
\[
    \bigg|\emph{Cov} \Big[ \prod_{j \in \cS} z^t_j , \prod_{j' \in \cS'} z^{t'}_{j'} \Big] \bigg| \leq \frac{4p\beta^3}{n}.
\]
\end{lemma}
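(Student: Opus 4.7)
The plan is to compute the covariance directly from the joint distribution of the rollout and then invoke Lemma~\ref{lem:crazy_bound} to control the resulting difference of falling factorials. First I would derive the joint probability by conditioning on the event $\cS \subseteq \mathbf{z}^t$. Using the natural sequential description of the rollout (draw $\mathbf{z}^t$ as a $\mathrm{CRD}(tpn/\beta, n)$, then draw $\mathbf{z}^{t'} \setminus \mathbf{z}^t$ as a $\mathrm{CRD}((t'-t)pn/\beta, n - tpn/\beta)$ from $[n] \setminus \mathbf{z}^t$), the conditional distribution of $\mathbf{z}^{t'} \setminus \cS$ on $[n] \setminus \cS$ is exactly a $\mathrm{CRD}(t'pn/\beta - |\cS|,\, n - |\cS|)$. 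Combined with the disjointness $\cS \cap \cS' = \varnothing$, this produces
\[
    \Pr\big[\cS \subseteq \mathbf{z}^t,\, \cS' \subseteq \mathbf{z}^{t'}\big] = \Big[\tfrac{tpn/\beta}{n}\Big]_{|\cS|} \cdot \Big[\tfrac{t'pn/\beta - |\cS|}{n - |\cS|}\Big]_{|\cS'|}.
\]
Subtracting the product of marginals and factoring out $[tpn/\beta / n]_{|\cS|}$ gives the covariance as $[tpn/\beta/n]_{|\cS|}$ times the bracketed difference $[\tfrac{t'pn/\beta - |\cS|}{n - |\cS|}]_{|\cS'|} - [\tfrac{t'pn/\beta}{n}]_{|\cS'|}$.

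Next I would apply Lemma~\ref{lem:crazy_bound} to this bracketed difference with $xn = t'pn/\beta$, $a = |\cS|$, and $b = |\cS'|$. The hypothesis condition $\sqrt{2}ab + b - 1 \leq xn$ follows from the assumed bound $t'pn/\beta \geq 2\beta^2 + \beta - 1$ together with $|\cS|, |\cS'| \leq \beta$ and $\sqrt{2} \leq 2$. The lemma gives
\[
    \left| \Big[\tfrac{t'pn/\beta - |\cS|}{n - |\cS|}\Big]_{|\cS'|} - \Big[\tfrac{t'pn/\beta}{n}\Big]_{|\cS'|} \right| \leq \tfrac{2|\cS||\cS'|}{t'pn/\beta - |\cS'| + 1} \cdot \Big[\tfrac{t'pn/\beta}{n}\Big]_{|\cS'|}.
\]

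Finally I would bound each piece crudely. Both falling factorials satisfy $[xn/n]_m \leq x^m$, so $[tpn/\beta/n]_{|\cS|} \cdot [t'pn/\beta/n]_{|\cS'|} \leq (tp/\beta)^{|\cS|} (t'p/\beta)^{|\cS'|} \leq p^{|\cS| + |\cS'|} \leq p^2$, using $t, t' \leq \beta$ and $|\cS|, |\cS'| \geq 1$. The factor $|\cS||\cS'|$ is at most $\beta^2$. For the denominator, the hypothesis implies $t'pn/\beta - \beta + 1 \geq t'pn/(2\beta)$, so $t'pn/\beta - |\cS'| + 1 \geq t'pn/(2\beta)$. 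Assembling these yields $|\mathrm{Cov}| \leq p^2 \cdot \beta^2 \cdot \tfrac{2 \cdot 2\beta}{t'pn} = \tfrac{4p\beta^3}{t'n} \leq \tfrac{4p\beta^3}{n}$, where the last inequality uses $t' \geq 1$ (forced by the hypothesis, since $t' = 0$ would give $t'pn/\beta = 0$).

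The main obstacle will be Step 1, correctly reading off the joint probability from the rollout definition; the disjointness of $\cS$ and $\cS'$ is essential both there (it cleans up the conditional distribution on $[n] \setminus \cS$) and in the final bound (it gives the $p^2$ rather than $p$ factor). The remaining steps are routine algebra with Lemma~\ref{lem:crazy_bound}.
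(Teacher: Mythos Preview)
Your proposal is correct and follows essentially the same route as the paper: derive the joint probability via the nested CRD structure, factor the covariance as $\big[\tfrac{tpn/\beta}{n}\big]_{|\cS|}\big[\tfrac{t'pn/\beta}{n}\big]_{|\cS'|}$ times the ratio discrepancy, then invoke Lemma~\ref{lem:crazy_bound} with $x=t'p/\beta$, $a=|\cS|$, $b=|\cS'|$ and finish with the same crude bounds $|\cS||\cS'|\le\beta^2$ and the falling factorials $\le p^2$. The only cosmetic difference is in the last step, where the paper lower-bounds the denominator by $(pn-\beta^2)/\beta$ and uses $pn\ge 2\beta^2$, whereas you lower-bound it by $t'pn/(2\beta)$ and use $t'\ge 1$; both yield the same constant.
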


\begin{proof}
First, let us note that if $t = 0$, then the first argument of this covariance is not random, so the covariance simplifies to $0$, trivially satisfying the bound. Thus, we may assume that $1 \leq t \leq t'$. We can rewrite the covariance expression:
\begin{align*}
    \bigg| \Cov \Big[ \prod_{j \in \cS} z^t_j , \prod_{j' \in \cS'} z^{t'}_{j'} \Big] \bigg| 
    &= \bigg| \E \Big[ \prod_{j \in \cS} z^t_j \prod_{j' \in \cS'} z^{t'}_{j'} \Big] - \E \Big[ \prod_{j \in \cS} z^t_j \Big] \E \Big[ \prod_{j' \in \cS'} z^{t'}_{j'} \Big] \bigg| \\
    &= \Big[ \tfrac{ptn/\beta}{n} \Big]_{|\cS|} \Big[ \tfrac{pt'n/\beta}{n} \Big]_{|\cS'|} \cdot \left| \frac{\Big[ \tfrac{pt'n/\beta - |\cS|}{n-|\cS|} \Big]_{|\cS'|}}{\Big[ \tfrac{pt'n/\beta}{n} \Big]_{|\cS'|}} - 1\right|.
\end{align*}

We can bound this last absolute value expression using Lemma~\ref{lem:crazy_bound}, letting $x = pt'/\beta$, $a = |\cS|$, and $b = |\cS'|$. Note that $a,b \leq \beta$, so our assumption that $\tfrac{pt'n}{\beta} \geq 2\beta^2 + \beta - 1$ ensures that $xn \geq \sqrt{2}ab + b - 1$. We find that 
\[
    \bigg| \Cov \Big[ \prod_{j \in \cS} z^t_j , \prod_{j' \in \cS'} z^{t'}_{j'} \Big] \bigg| 
    \leq 
    \Big( \tfrac{pt}{\beta} \Big)^{|\cS|} \Big( \tfrac{pt'}{\beta} \Big)^{|\cS'|} \cdot \frac{2|\cS||\cS'|}{\tfrac{pt'n}{\beta} - |\cS'| + 1}
    \leq 
    \frac{2p^2 \beta^3}{pn - \beta^2}
    \leq 
    \frac{4p\beta^3}{n}
\]
Here, the final equality uses the fact that $pn \geq pt'n/\beta \geq 2\beta^2$ to conclude that $\frac{p}{pn-\beta^2} \leq \frac{2}{n}$.
\end{proof}

When $\cS \cap \cS' \neq \emptyset$ for $|\cS|, |\cS'| \geq 1$, it follows that 
\[
    \bigg|\Cov \Big[ \prod_{j \in \cS} z^t_j , \prod_{j' \in \cS'} z^{t'}_{j'} \Big] \bigg| \leq p.
\]
Plugging this into Lemma 6 of \cite{cortez2022neurips}, (so, in their notation, $\alpha = \Big( \tfrac{\beta}{p} \Big)^\beta$, $B_1 = p$, and $B_2 = \frac{4p\beta^3}{n}$), we can upper bound the variance of the staggered rollout estimator under a CRD rollout design by
\begin{equation} \label{eq:refined_rollour_var}
    \Var\big(\widehat{\TTE}\big) \leq \frac{\beta^2 Y_{\max}^2 p}{n} \cdot \Big( \tfrac{\beta}{p} \Big)^{2\beta} \cdot \Big( d^2 + 4\beta^3 \Big).
\end{equation}

\mcdelete{\textbf{Please Note:} There is a small typo in the variance bound given in the statement of this theorem in the paper body. A correct bound, which is proven below is 
\mcreplace{\[
   \Var\Big(\widehat{\TTE}\Big) 
   \leq \Ind(q < 1) \cdot \tfrac{ q^3 \beta^2 Y_{\max}^2}{ p^2 n} \Big( \tfrac{\beta}{q} \Big)^{2\beta} \big( d^2 + 4 \beta^3 \big)
   + \tfrac{q-p}{p(n_c -1)} \widehat{\Var}(\bar{L}_\pi)
   + \Ind(q > p) \cdot \tfrac{2d^2 Y_{\max}}{n_c} \cdot C(\delta(\Pi)). 
\]}
{
\[
   \Var\Big(\widehat{\TTE}\Big) 
   \leq \Ind(q < 1) \cdot \tfrac{ q^3 \beta^2 Y_{\max}^2}{ p^2 n} \Big( \tfrac{\beta}{q} \Big)^{2\beta} \big( d^2 + 4 \beta^3 \big)
   + \tfrac{q-p}{p(n_c -1)} \widehat{\Var}(\bar{L}_\pi)
   + \Ind(q > p) \cdot \Big(\tfrac{\beta d}{n_c} + \tfrac{d^2}{n_c}\Big) \cdot Y_{\max} \cdot C(\delta(\Pi)). 
\]
}

We will correct the statement in the final version of the paper. Importantly, the interpretation of this variance bound remains the same.}

\begin{proof}[Proof of Theorem~\ref{thm:var_bounds}] \phantom{a} \\

    By the Law of Total Variance, we have
    \[
        \Var_\bz\big(\widehat{\TTE}\big) = \E_\cU \Big[ \Var_{\bz | \cU} \Big( \widehat{\TTE} \Big) \Big] + \Var_\cU \Big( \E_\bz \big[ \widehat{\TTE} \; \big| \; \cU \big] \Big).
    \]
    We separately bound each of these terms. 

    \underline{First Term:}

    \meedit{First, let us note that when $q=1$, $h_{0,q} = -1$, $h_{\beta,q} = 1$ and $h_{t,q} = 0$ for all $0 < t < \beta$. In this case, we may simplify the estimator to 
    \[
        \widehat{\TTE} = \tfrac{1}{np} \sum_{i=1}^{n} Y_i(\bz^\beta) - Y_i(\bz^0).
    \]
    Conditioned on $\cU$, this quantity is deterministic, since $z_j^\beta = \Ind(j \in \cU)$ and $z_j^0 = 0$. Thus, the variance of the estimator conditioned on $\cU$ is $0$, making the first term of our variance expression $0$. Thus, we may restrict our attention to the case when $q < 1$ and multiply the resulting expression by the indicator $\Ind(q < 1)$ in our final bound.
    }
    
    Now, let $\tilde{\mathbf{z}} \sim \textrm{CRD}(qn,n)$ be a random vector with $z_j^t = \tilde{z}_j^t \cdot \Ind(j \in \cU).$ Conditioned on $\cU$, we may rewrite our estimator:
    \begin{align*} 
        \widehat{\TTE} &= 
            \tfrac{q}{np} \sum\limits_{t=0}^{\beta} h_{t,q} \sum\limits_{i=1}^{n} \sum_{\cS \in \cS_i^{\beta}} c_{i,\cS} \prod_{j \in \cS} z_j^t \\
            &= \sum\limits_{t=0}^{\beta} h_{t,q} \cdot \bigg( \tfrac{1}{n} \sum\limits_{i=1}^{n} \sum_{\cS \in \cS_i^{\beta}} \tfrac{q}{p} \cdot c_{i,\cS} \cdot \Ind(\cS \subseteq \cU) \prod_{j \in \cS} \tilde{z}_j^t \bigg) \\
            &= \sum\limits_{t=0}^{\beta} h_{t,q} \cdot \bigg( \tfrac{1}{n} \sum\limits_{i=1}^{n} \sum_{\cS \in \cS_i^{\beta}} \tilde{c}_{i,\cS} \prod_{j \in \cS} \tilde{z}_j^t \bigg) \\
            &= \sum\limits_{t=0}^{\beta} h_{t,q} \cdot  \Big( \tfrac{1}{n} \sum\limits_{i=1}^{n} \tilde{Y}_i(\tilde{\mathbf{z}}^t) \Big),
    \end{align*}
    where
    \[
        \tilde{c}_{i,\cS} = \tfrac{q}{p} c_{i,\cS} \cdot \Ind(\cS \subseteq \cU),
        \hspace{40pt}
        \tilde{Y}_i(\tilde{\mathbf{z}}) = \sum_{\cS \in \cS_i^{\beta}} \tilde{c}_{i,\cS} \prod_{j \in \cS} \tilde{z}_j^t.
    \]
    Writing it in this way, we can see that the distribution of $\widehat{\TTE}$ conditioned on $\cU$ is equivalent to the distribution of the polynomial interpolation estimator in \citet{cortez2022neurips} with $\tilde{\mathbf{z}} \sim \text{CRD}(qn,n)$ for a modified potential outcomes model given by the coefficients $\tilde{c}_{i,\cS}$.
    
    Under the assumption that $c_{i,\cS} \geq 0$, then $\tilde{Y}_i(\bz) \leq \frac{q}{p} Y_{i}(\bz)$.
    
    As a result, the variance of $\widehat{\TTE}$ conditioned on $\cU$ can be upper-bounded from \eqref{eq:refined_rollour_var}. As this expression does not depend on $\cU$,
    \[
        \E_{\cU}\bigg[\Var_{\bz|\cU}\Big(\widehat{\TTE}\Big)\bigg] \leq \tfrac{ q^3 \beta^2 Y_{\max}^2}{ p^2 n} \cdot \Big( \tfrac{\beta}{q} \Big)^{2\beta} \cdot \Big( d^2 + 4\beta^3 \Big).
    \]

    \cycomment{Also our upper bounds were really under an assumption that the budget was less than 1/2, so we may need to double-check the proof for cases when $q > 1/2$.} \mecomment{I can't find anywhere where we needed this assumption... Do you have something specific in mind? I am worried though that we need to make explicit this new bounding assumption, which can be summarized by $p \geq \frac{2\beta^3+\beta}{n}$.}

    \underline{Second Term:}

    \meedit{First, let us note that when $q=p$, every individual is deterministically included in $\cU$ during Stage 1 of the experiment. In this case, the second term, which concerns a variance over $\cU$, is $0$. Thus, we may restrict our attention to the case when $q > p$ and multiply the resulting expression by the indicator $\Ind(q > p)$ in our final bound.}

    We first split $\E_{\bz} \big[ \hTTE \:\big|\: \cU \big] $ from \eqref{eq:e_tte_cond_u} into the terms associated to sets $\cS$ that are fully contained inside a cluster as opposed to sets $\cS$ that contain members of more than one cluster. 
    \begin{equation} \label{eq:var2terms}
    \E_{\bz} \big[ \hTTE \:\big|\: \cU \big] 
        = \tfrac{q}{np} \sum\limits_{i=1}^{n} \sum_{\cS \in \cS_i^\beta \setminus \varnothing} c_{i,\cS} \cdot \Ind \big( \cS \subseteq \cU, |\Pi(\cS)| = 1 \big) 
        + \tfrac{q}{np} \sum\limits_{i=1}^{n} \sum_{\cS \in \cS_i^\beta \setminus \varnothing} c_{i,\cS} \cdot \Ind \big( \cS \subseteq \cU, |\Pi(\cS)| \geq 2 \big).
    \end{equation}
    We may rewrite the first term of \eqref{eq:var2terms}:
    \[
    \tfrac{q}{np} \sum\limits_{i=1}^{n} \sum_{\cS \in \cS_i^\beta \setminus \varnothing} c_{i,\cS} \cdot \Ind \big( \cS \subseteq \cU, |\Pi(\cS)| = 1 \big)
        \quad=\quad \tfrac{q}{np} \sum_{i=1}^{n} \sum_{\pi \in \Pi} x_\pi \sum_{\substack{\cS \in \cS_i^\beta \\ \cS \ne \varnothing}} c_{i,\cS} \cdot \Ind \big(S \subseteq \pi \big) \\
        \quad=\quad \tfrac{q}{p n_c} \sum_{\pi \in \Pi} x_\pi \bar{L}_\pi,
    \]
    where $x_\pi = \Ind\big( \pi \subseteq \cU \big)$ and $\bar{L}_\pi$ is defined as in the main text, with 
    \[
        \bar{L}_\pi = \tfrac{n_c}{n} \sum_{i=1}^{n} \sum_{\cS \subseteq [n]} c_{i,\cS} \cdot \Ind \big(\cS \subseteq \pi \big),
    \]
    which represents the effects associated with sets fully contained inside cluster $\pi$. In Stage 1, we select clusters according to a CRD design. In particular, $\bx \sim \textrm{CRD}(p n_c/q, n_c)$. Applying Lemma \ref{lemma:CRD_PNAS}, we find that the variance of the first term of \eqref{eq:var2terms} is equal to
    \[
        \tfrac{q-p}{p(n_c -1)} \cdot \widehat{\Var}\big(\bar{L}_\pi \big).
    \]

    To upper bound the terms of the variance associated to the second term of $\E_{\bz} \big[ \hTTE \:\big|\: \cU \big]$ associated to all the sets $\cS$ for which $|\Pi(\cS)| \geq 2$, we use the bound that for any $\cS$ such that $|\Pi(\cS)| \geq 2$,
    \[
        \Cov\Big(\Ind \big( \cS \subseteq \cU), \Ind \big( \cS' \subseteq \cU)\Big) \leq \Pr(\cS \subseteq \cU) \cdot \Ind\Big(\Pi(\cS) \cap \Pi(\cS') \neq \emptyset \Big).
    \]
    In addition, we'll make use of our assumption that each  $c_{i\cS} \geq 0$. Plugging in these bounds, it follows that
    \begin{align*}
    &\Cov\Big(\tfrac{q}{np} \sum\limits_{i=1}^{n} \sum_{\cS \in \cS_i^\beta \setminus \varnothing} c_{i,\cS} \cdot \Ind \big( \cS \subseteq \cU, |\Pi(\cS)| \geq 2 \big), \tfrac{q}{p n_c} \sum_{\pi \in \Pi} x_\pi \bar{L}_\pi \Big) \\
    &= \tfrac{q^2}{n n_c p^2} \sum\limits_{i=1}^{n} \sum_{\cS \in \cS_i^\beta \setminus \varnothing} c_{i,\cS} \cdot \Ind \big(|\Pi(\cS)| \geq 2 \big) \sum_{\pi \in \Pi} \bar{L}_\pi \Cov (\Ind \big( \cS \subseteq \cU), x_\pi) \\
    &\leq \tfrac{q^2}{n n_c p^2} \sum\limits_{i=1}^{n} \sum_{\cS \in \cS_i^\beta \setminus \varnothing} c_{i,\cS} \cdot \Ind \big(|\Pi(\cS)| \geq 2 \big) \sum_{\pi \in \Pi} \bar{L}_\pi \cdot \Pr(\cS \subseteq \cU) \cdot \Ind(c \in \Pi(\cS)) \\
    &= \tfrac{q^2}{p^2 n n_c} \sum\limits_{i=1}^{n} \sum_{\cS \in \cS_i^\beta \setminus \varnothing} c_{i,\cS} \cdot \Pr(\cS \subseteq \cU) \cdot \Ind \big(|\Pi(\cS)| \geq 2 \big) \sum_{\pi \in \Pi(\cS)} \bar{L}_\pi \\
    &= \tfrac{q^2}{p^2 n^2} \sum\limits_{i=1}^{n} \sum_{\cS \in \cS_i^\beta \setminus \varnothing} c_{i,\cS} \cdot \Pr(\cS \subseteq \cU) \cdot \Ind \big(|\Pi(\cS)| \geq 2 \big) \sum_{\pi \in \Pi(\cS)} \sum_{i' \in [n]} \sum_{\cS' \in \cS_i'^\beta} c_{i',\cS'} \cdot \Ind \big(\cS' \subseteq \pi \big) \\
    &\leq \tfrac{q^2}{p^2 n^2} \sum\limits_{i=1}^{n} \sum_{\cS \in \cS_i^\beta \setminus \varnothing} c_{i,\cS} \cdot \Pr(\cS \subseteq \cU) \cdot \Ind \big(|\Pi(\cS)| \geq 2 \big) \sum_{\pi \in \Pi(\cS)} \sum_{i' \in [n]} \Ind \big(\pi \in \Pi(\cN_{i'}) \big) \sum_{\cS' \in \cS_i'^\beta} c_{i',\cS'} \\
    &\leq \tfrac{q^2 Y_{\max}}{p^2 n^2} \sum\limits_{i=1}^{n} \sum_{\cS \in \cS_i^\beta \setminus \varnothing} c_{i,\cS} \cdot \Pr(\cS \subseteq \cU) \cdot \Ind \big(|\Pi(\cS)| \geq 2 \big) \sum_{\pi \in \Pi(\cS)} \sum_{i' \in [n]} \Ind \big(\pi \in \Pi(\cN_{i'}) \big) \\
    &\leq \tfrac{q^2 Y_{\max}}{p^2 n^2} \sum\limits_{i=1}^{n} \sum_{\cS \in \cS_i^\beta \setminus \varnothing} c_{i,\cS} \cdot \Pr(\cS \subseteq \cU) \cdot \Ind \big(|\Pi(\cS)| \geq 2 \big) \sum_{\pi \in \Pi(\cS)} \tfrac{nd}{n_c} \\
    &= \tfrac{q^2 d \beta Y_{\max}}{p^2 n_c} \bigg( \tfrac{1}{n} \sum\limits_{i=1}^{n} \sum_{\cS \in \cS_i^\beta \setminus \varnothing} c_{i,\cS} \cdot \Pr(\cS \subseteq \cU) \cdot \Ind \big(|\Pi(\cS)| \geq 2 \big) \bigg).
    \end{align*}
    In addition, 
    \begin{align*}
        &\Var\Big(\tfrac{q}{np} \sum\limits_{i=1}^{n} \sum_{\cS \in \cS_i^\beta \setminus \varnothing} c_{i,\cS} \cdot \Ind \big( \cS \subseteq \cU, |\Pi(\cS)| \geq 2 \big) \Big) \\
        &= \tfrac{q^2}{n^2p^2} \sum\limits_{i=1}^{n} \sum_{\cS \in \cS_i^\beta \setminus \varnothing} c_{i,\cS} \cdot \Ind \big(|\Pi(\cS)| \geq 2 \big) \sum\limits_{i'=1}^{n} \sum_{\cS' \in \cS_{i'}^\beta \setminus \varnothing} c_{i',\cS'} \cdot \Ind \big(|\Pi(\cS')| \geq 2 \big) \cdot \Cov\Big(\Ind \big( \cS \subseteq \cU) , \Ind \big( \cS' \subseteq \cU)\Big) \\
        &\leq \tfrac{q^2}{n^2p^2} \sum\limits_{i=1}^{n} \sum_{\cS \in \cS_i^\beta \setminus \varnothing} c_{i,\cS} \cdot \Ind \big(|\Pi(\cS)| \geq 2 \big) \sum\limits_{i'=1}^{n} \sum_{\cS' \in \cS_{i'}^\beta \setminus \varnothing} c_{i',\cS'} \cdot  \Pr(\cS \subseteq \cU) \cdot \Ind\Big(\Pi(\cS) \cap \Pi(\cS') \neq \emptyset\Big) \\
        &\leq \tfrac{q^2}{p^2 n^2} \sum\limits_{i=1}^{n} \sum_{\cS \in \cS_i^\beta \setminus \varnothing} c_{i,\cS} \cdot \Pr(\cS \subseteq \cU) \cdot \Ind \big(|\Pi(\cS)| \geq 2 \big) \sum\limits_{i'=1}^{n} \Ind\Big(\Pi(\cN_i) \cap \Pi(\cN_i') \neq \emptyset\Big) \sum_{\cS' \in \cS_{i'}^\beta \setminus \varnothing} c_{i',\cS'}  \\ 
        &\leq \tfrac{q^2 Y_{\max}}{p^2 n} \bigg( \tfrac{1}{n} \sum\limits_{i=1}^{n} \sum_{\cS \in \cS_i^\beta \setminus \varnothing} c_{i,\cS} \cdot \Pr(\cS \subseteq \cU) \cdot \Ind \big(|\Pi(\cS)| \geq 2 \big) \sum\limits_{i'=1}^{n} \Ind\Big(\Pi(\cN_i) \cap \Pi(\cN_i') \neq \emptyset\Big) \bigg) \\ 
        &\leq \tfrac{q^2 d^2 Y_{\max}}{p^2 \mereplace{n}{n_c}} \left(\frac{1}{n} \sum\limits_{i=1}^{n} \sum_{\cS \in \cS_i^\beta \setminus \varnothing} c_{i,\cS} \Pr(\cS \subseteq \cU) \Ind \big(|\Pi(\cS)| \geq 2 \big) \right).
    \end{align*}

    Putting it all together, we get that 
    \mcreplace{
    \mereplace{
    \begin{align*}
        \Var_{\cU}\bigg[\E_{\bz}\Big(\hTTE \ | \ \cU\Big)\bigg] \leq \tfrac{q-p}{p(n_c -1)} \cdot \widehat{\Var}\big(\bar{L}_\pi \big) 
            + \left(\tfrac{Y_{\max} \beta (d+1)}{n_c} + \tfrac{Y_{\max} d^2}{n}\right) \left(\frac{q^2}{p^2 n} \sum\limits_{i=1}^{n} \sum_{\cS \in \cS_i^\beta \setminus \varnothing} c_{i,\cS} \Pr(\cS \subseteq \cU) \Ind \big(|\Pi(\cS)| \geq 2 \big) \right).
    \end{align*}
    }
    {
    \begin{align*}
        \Var_{\cU}\bigg[\E_{\bz}\Big(\hTTE \ | \ \cU\Big)\bigg] 
        &\leq \tfrac{q-p}{p(n_c -1)} \cdot \widehat{\Var}\big(\bar{L}_\pi \big) 
            + \tfrac{2 d^2 Y_{\max}}{n_c} \bigg(\tfrac{q^2}{p^2 n} \sum\limits_{i=1}^{n} \sum_{\cS \in \cS_i^\beta \setminus \varnothing} c_{i,\cS} \cdot \Pr(\cS \subseteq \cU) \cdot \Ind \big(|\Pi(\cS)| \geq 2 \big) \bigg) \\
        &\leq \tfrac{q-p}{p(n_c -1)} \cdot \widehat{\Var}\big(\bar{L}_\pi \big) 
            + \tfrac{2 d^2 Y_{\max}}{n_c} \bigg(\tfrac{1}{n} \sum\limits_{i=1}^{n} \sum_{\cS \in \cS_i^\beta \setminus \varnothing} c_{i,\cS} \cdot \Ind \big(|\Pi(\cS)| \geq 2 \big) \bigg) \\
        &\leq \tfrac{q-p}{p(n_c -1)} \cdot \widehat{\Var}\big(\bar{L}_\pi \big) 
            + \tfrac{2 d^2 Y_{\max}}{n_c} \cdot C(\delta(\Pi)).
    \end{align*}
    Here, the second inequality uses the fact that 
    \[
        \Pr(\cS \subseteq \cU) \cdot \Ind \big(|\Pi(\cS)| \geq 2 \big) \leq (p/q)^2 \cdot \Ind \big(|\Pi(\cS)| \geq 2 \big).
    \]
    }}
    {\begin{align*}
        \Var_{\cU}\bigg[\E_{\bz}\Big(\hTTE \ | \ \cU\Big)\bigg] 
        &\leq \tfrac{q-p}{p(n_c -1)} \cdot \widehat{\Var}\big(\bar{L}_\pi \big) 
            + \Big(\tfrac{d \beta Y_{\max}}{n_c} + \tfrac{d^2 Y_{\max}}{ n_c}\Big) \bigg(\tfrac{q^2}{p^2 n} \sum\limits_{i=1}^{n} \sum_{\cS \in \cS_i^\beta \setminus \varnothing} c_{i,\cS} \cdot \Pr(\cS \subseteq \cU) \cdot \Ind \big(|\Pi(\cS)| \geq 2 \big) \bigg)\\
        &\leq \tfrac{q-p}{p(n_c -1)} \cdot \widehat{\Var}\big(\bar{L}_\pi \big) 
            + \Big(\tfrac{d \beta}{n_c} + \tfrac{d^2}{n_c}\Big) Y_{\max} \bigg(\tfrac{1}{n} \sum\limits_{i=1}^{n} \sum_{\cS \in \cS_i^\beta \setminus \varnothing} c_{i,\cS} \cdot \Ind \big(|\Pi(\cS)| \geq 2 \big) \bigg) \\
        &\leq \tfrac{q-p}{p(n_c -1)} \cdot \widehat{\Var}\big(\bar{L}_\pi \big) 
            + \Big(\tfrac{d \beta}{n_c} + \tfrac{d^2}{n_c}\Big) \cdot Y_{\max} C(\delta(\Pi))\\
        &\leq \tfrac{q-p}{p(n_c -1)} \cdot \widehat{\Var}\big(\bar{L}_\pi \big) 
            + \tfrac{2 d^2}{n_c} \cdot Y_{\max} C(\delta(\Pi)).
    \end{align*}}
    Here, the second inequality uses the fact that 
    \[
        \Pr(\cS \subseteq \cU) \cdot \Ind \big(|\Pi(\cS)| \geq 2 \big) \leq (p/q)^2 \cdot \Ind \big(|\Pi(\cS)| \geq 2 \big).
    \]
    \end{proof}

\subsection*{Theorem \ref{thm:variance-deg1-crd}}

\begin{proof}
When $\beta = 1$, the estimator simplifies to
\[
    \widehat{\TTE}
    = \tfrac{1}{np} \sum_{i \in [n]} \Big(Y_i(\bz^1) - Y_i(\mathbf{0})\Big)
    = \tfrac{1}{np} \sum_{j \in [n]} L_j z_j^1
    = \tfrac{1}{np} \sum_{\pi} \sum_{j \in \pi} L_j z_j^1.
\]
Conditioning on $\cU$, the estimator becomes
\[
   \tfrac{1}{np} \sum_{j \in \cU} L_j z_j^1 = \tfrac{1}{q |\cU|} \sum_{j \in \cU} L_j z_j^1,
\]
where here we use the fact that $|\cU| = \tfrac{np}{q}$. Since $\bz_\cU \sim \textrm{CRD}(q|\cU|,q)$, we may use Lemma~\ref{lemma:CRD_PNAS} to obtain an expression for the conditional variance:
\[
 \Var_{\bz\mid\cU} \big( \hTTE \big) = \tfrac{1-q}{q(|\cU|-1)} \cdot \bigg[\tfrac{1}{|\cU|} \sum_{j\in\cU} L_j^2 \ - \ \Big( \tfrac{1}{|\cU|} \sum_{j\in\cU} L_j \Big)^2 \bigg].
\]
Taking the expectation of this conditional variance with respect to $\cU$, we have
\begin{align*}
    \E_\cU\Big[\Var_{\bz\mid\cU} \big( \hTTE \big)\Big] 
    &= \tfrac{1-q}{q(|\cU|-1)} \bigg[ \tfrac{1}{|\cU|} \sum_{j \in [n]}L_j^2 \cdot \Pr(j\in\cU) \ - \ \tfrac{1}{|\cU|^2} \sum_{j \in [n]} \sum_{j' \in [n]} L_jL_{j'} \cdot \Pr(j,j'\in\cU) \bigg] \\
    &= \tfrac{1-q}{np-q} \bigg[ \tfrac{q}{np} \sum_{j \in [n]} L_j^2 \cdot \Pr(j\in\cU) \ - \ \tfrac{q^2}{n^2p^2} \sum_{j \in [n]} \sum_{j' \in [n]} L_jL_{j'} \cdot \Pr(j,j'\in\cU) \bigg] \\
    &= \tfrac{1-q}{np-q} \bigg[ \tfrac{1}{n} \sum_{j \in [n]} L_j^2 
        - \tfrac{q}{n^2p} \sum_{j \in [n]} \sum_{j' \in \pi(j)} L_jL_{j'}
        - \tfrac{p n_c - q}{n^2p(n_c-1)} \sum_{j \in [n]} \sum_{j' \not\in \pi(j)} L_jL_{j'} \bigg] \\
    &= \tfrac{1-q}{np-q} \bigg[ \tfrac{1}{n} \sum_{j \in [n]}L_j^2
        - \tfrac{q}{n^2p} \sum_{\pi \in \Pi} \Big( \sum_{j \in \pi} L_j \Big)^2 
        - \tfrac{p n_c - q}{n^2p(n_c-1)} \Big[ \Big( \sum_{\pi \in \Pi} \sum_{j \in \pi}  L_j \Big)^2 - \sum_{\pi \in \Pi} \Big( \sum_{j \in \pi} L_j \Big)^2 \Big] \bigg] \\
    &= \tfrac{1-q}{np-q} \bigg[ \tfrac{1}{n} \sum_{j \in [n]}L_j^2
        - \tfrac{q}{n^2p} \sum_{\pi \in \Pi} \Big( \sum_{j \in \pi} L_j \Big)^2 
        - \tfrac{p n_c - q}{n^2p(n_c-1)} \Big( \sum_{j \in [n]}  L_j \Big)^2 
        + \tfrac{p n_c - q}{n^2p(n_c-1)} \sum_{\pi \in \Pi} \Big( \sum_{j \in \pi} L_j \Big)^2 \bigg] \\
    &= \tfrac{1-q}{np-q} \bigg[ \tfrac{1}{n} \sum_{j \in [n]}L_j^2
        + \tfrac{(p-q)n_c}{n^2p(n_c-1)} \sum_{\pi \in \Pi} \Big( \sum_{j \in \pi} L_j \Big)^2 
        - \tfrac{p n_c - q}{p(n_c-1)} \Big( \tfrac{1}{n} \sum_{j \in [n]}  L_j \Big)^2 \bigg] \\
    &= \tfrac{1-q}{np-q} \bigg[ \Big[ \tfrac{1}{n} \sum_{j \in [n]} L_j^2 - \Big( \tfrac{1}{n} \sum_{j \in [n]}  L_j \Big)^2 \Big]
        + \tfrac{p-q}{p(n_c-1)} \cdot \tfrac{1}{n_c} \sum_{\pi \in \Pi} \Big( \tfrac{n_c}{n} \sum_{j \in \pi} L_j \Big)^2 
        - \tfrac{p-q}{p(n_c-1)} \Big( \tfrac{1}{n_c} \sum_{\pi \in \Pi} \tfrac{n_c}{n} \sum_{j \in \pi}  L_j \Big)^2 \bigg] \\
    &= \tfrac{1-q}{np-q} \bigg[ \; \underset{j \in [n]}{\widehat{\Var}} \big( L_j \big)
        + \tfrac{p-q}{p(n_c-1)} \Big[ \tfrac{1}{n_c} \sum_{\pi \in \Pi} \big( \bar{L}_\pi \big)^2 - \Big( \tfrac{1}{n_c} \sum_{\pi \in \Pi} \bar{L}_\pi \Big)^2 \Big] \bigg] \\
    &= \tfrac{1-q}{np-q} \bigg[ \; \underset{j \in [n]}{\widehat{\Var}} \big( L_j \big)
        + \tfrac{p-q}{p(n_c-1)} \cdot \underset{\pi \in \Pi}{\widehat{\Var}} \big( \bar{L}_\pi \big) \bigg].
\end{align*}

The conditional expectation is given by 
\[
    \E_{\bz} \big[ \widehat{\TTE} \;\big|\; \cU \big] 
    = \tfrac{q}{pn_c} \sum_{\pi\in\Pi} \Big(\tfrac{n_c}{n}\sum_{j\in\pi} L_j\Big) \cdot \Ind(\pi \subseteq \cU)
    = \tfrac{q}{pn_c} \sum_{\pi\in\Pi} \bar{L}_j \cdot \Ind(\pi \subseteq \cU).
\]
Since these indicator random variables are sampled in Stage 1 according to a $\textrm{CRD}(pn_c/q, n_c)$ distribution, we may apply Lemma~\ref{lemma:CRD_PNAS} to conclude that
\[
    \Var\Big(\E_{\bz\mid\cU}\big[\hTTE\big]\Big) 
    = \tfrac{1 - (p/q)}{(p/q)(n_c-1)} \cdot \underset{\pi \in \Pi}{\widehat{\Var}} \big(\bar{L}_\pi\big)
    = \tfrac{q-p}{p(n_c-1)} \cdot \underset{\pi \in \Pi}{\widehat{\Var}} \big(\bar{L}_\pi\big).
\]

Putting this together, we find that
\[
    \Var\Big( \widehat{\TTE} \Big)
    = \tfrac{1-q}{np-q} \cdot \underset{j \in [n]}{\widehat{\Var}} \big( L_j \big)
        + \tfrac{(p-q)(1-np)}{p(np-q)(n_c-1)} \cdot \underset{\pi \in \Pi}{\widehat{\Var}} \big( \bar{L}_\pi \big).
\]
\end{proof}

\section{Experiment Details} 
\label{app:experiments}

\subsection{Potential Outcomes Model}
We generate synthetic potential outcomes based on a generalization of the response model from \citet{ugander2023randomized} to incorporate $\beta$-order interactions:
\[
    Y_i(\bz) = Y_i(\mathbf{0}) \cdot \bigg( 1 + \delta z_i + \sum_{k=1}^{\beta} \gamma_k \cdot \Big( \begin{matrix} d_i \\ k \end{matrix} \Big)^{-1} \sum_{\substack{\cS \in \cS_i^\beta \\ |\cS| = k}} \prod_{j \in \cS} z_j \bigg), \quad \quad
    Y_i(\mathbf{0}) = \Big( a + b \cdot h_i + \varepsilon_i \Big) \cdot \tfrac{d_i}{\overline{d}}.
\]

In this model: 
\begin{itemize}
    \item $a$ is a baseline effect. We select $a=1$.
    \item $\textbf{h} \in \mathbb{R}^n$ is a Fiedler vector of the graph Laplacian of the network which has undergone an affine transformation so that $\min(\textbf{h})=-1$ and $\max(\textbf{h})=1$. This models network homophily effects. 
    \item $b$ controls the magnitude of the homophily effect. We select \mcreplace{$b=0.5$}{$b=0$}. \mcedit{We also ran the experiments with $b=0.5$, to compare no homophily with some homophily, but the analysis and conclusions do not change. These are included later in the appendix. }
    \item $\varepsilon_i \underset{\textrm{iid}}{\sim} N(0,\sigma)$ is a random perturbation of the baseline effect. We select $\sigma = 0.1$.
    \item $d_i$ is the in-degree of vertex $i$. $\overline{d}$ is the average in-degree.
    \item $\delta$ is uniform direct effect on treated individuals. We select $\delta = 0.5$.
    \item $\gamma_k$ is the effect of treated subsets of size $k$. We select $\gamma_k = 0.5^{k-1}$, which models marginal effects that decay with the size of the treated set. 
\end{itemize}

\subsection{Details of Real-World Networks} 
\label{app:networks}

Here, we provide more details of the three real-world data sets we use in our analysis. We include all the raw data files, cleaned data, and processing scripts in our provided source code. A summary of the datasets is given in the following table.

\begin{center}
    \begin{tabular}{|c|c|c|c|c|} \hline
        Dataset & Vertices & Edges & Degree & Features \\ \hline \hline
        \makecell{\textsc{Email} \\ \cite{leskovec2007graph,yin2017local};\\ \cite{snapnets}} & \makecell{employees \\ $n=1,005$} & \makecell{correspondence \\ directed \\ $|E|=25,571$} & \makecell{min: $1$ \\ max: $334$ \\ average: $25$} & \makecell{department \\ $|F| = 42$} \\ \hline
        \makecell{\textsc{BlogCatalog} \\ \cite{network_repo};\\ \cite{tang2009scalable,tang2009relational}} & \makecell{bloggers \\ $n=10,312$} & \makecell{connections \\ undirected \\ $|E|=333,983$} & \makecell{min: $1$ \\ max: $3,992$ \\ average: $65$} & \makecell{interests \\ $|F|=39$} \\ \hline
        \makecell{\textsc{Amazon} \\ \cite{leskovec2007dynamics};\\ \cite{snapnets}} &  \makecell{products \\ $n=14,436$} & \makecell{co-purchases \\ directed \\ $|E| = 70,832$} & \makecell{min: $1$ \\ max: $247$ \\ average: $5$} & \makecell{category \\ $|F|=13,591$} \\ \hline
    \end{tabular}
\end{center}

\subsection*{\textsc{Email}}

The \textsc{Email} dataset is publicly available at \url{https://snap.stanford.edu/data/email-Eu-core.html} and is licensed under the BSD license\footnote{For more information, see \url{https://snap.stanford.edu/snap/license.html} and \url{https://groups.google.com/g/snap-datasets/c/52MRzGbMkFg/m/FIFy_6qOCAAJ}}. This dataset models the email communications between members of a European research institution. The $n=1,005$ vertices of the network are (anonymized) institution members, and there is a directed edge from individual $i$ to individual $j$ if $i$ has sent at least one email to individual $j$. 

It has a minimum degree of 1, a maximum degree of 212, and an average degree of 25.8, and its degree distribution is visualized in Figure~\ref{fig:email_degrees}; the support of the histogram has been cropped to remove some large outliers. The largest weakly connected component in the network contains 986 vertices, and the largest strongly connected component contains 803 vertices. 

\begin{figure}[h]
    \centering
    \includegraphics[width=0.8\textwidth]{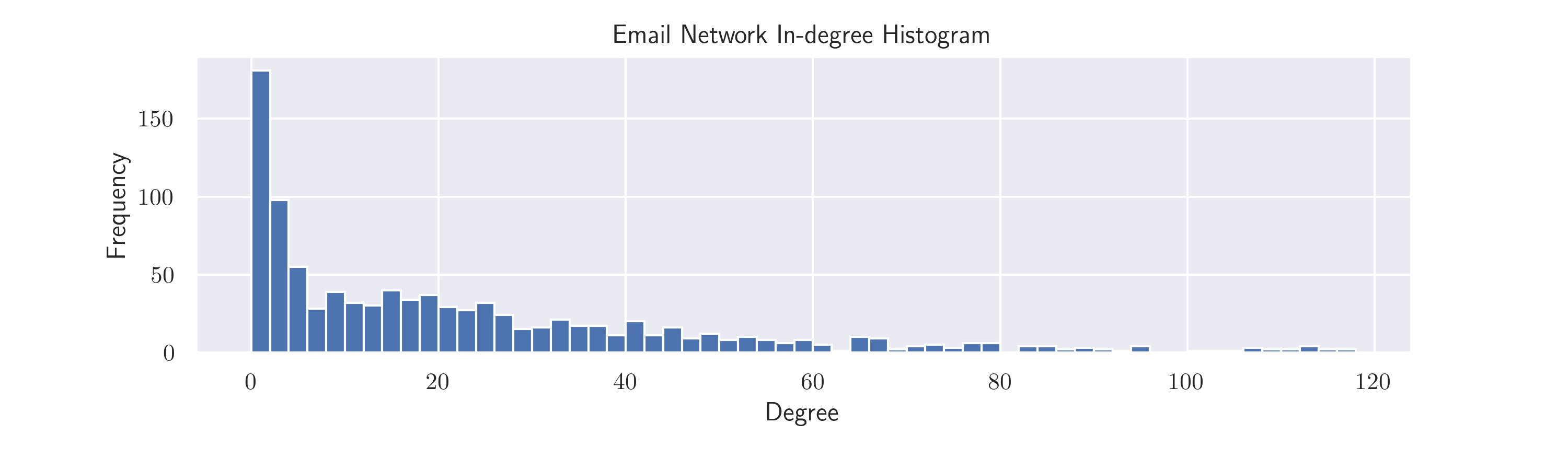}
    \caption{The degree distribution of the \textsc{Email} graph}
    \label{fig:email_degrees}
\end{figure}

Each individual in the network has been assigned one of 42 department labels. The sizes of these departments vary greatly, with the smallest department including a single individual and the largest department including 109 individuals. The average department size is 23.9. In the \textsc{Email} network, each vertex is assigned to exactly one department, and we use these assignments as our clustering. To pre-process this data for use in our experiments, we added self-loops to each node in the original dataset to represent the direct effect of the node's treatment on their outcome (See Section \ref{sec:preliminaries}).

\subsection*{\textsc{BlogCatalog}}
The \textsc{BlogCatalog} dataset is publicly available at \url{https://networkrepository.com/soc-BlogCatalog-ASU.php} and is licensed under a Creative Commons Attribution-ShareAlike License\footnote{For more information, see \url{https://networkrepository.com/policy.php}}. This dataset models the relationships between bloggers on the (now defunct) blogging website \textsf{http://www.blogcatalog.com}. The $n=10,312$ nodes represent bloggers and the (undirected) edges represent the social network of the bloggers. 

The network has a minimum degree of 1, a maximum degree of 3,992, and an average degree of 65, and its degree distribution is visualized in Figure \ref{fig:blog_degrees}; the support of the histogram has been cropped to remove some large outliers. The average clustering coefficient is approximately 0.46. 

\begin{figure}[h]
    \centering
    \includegraphics[width=0.8\textwidth]{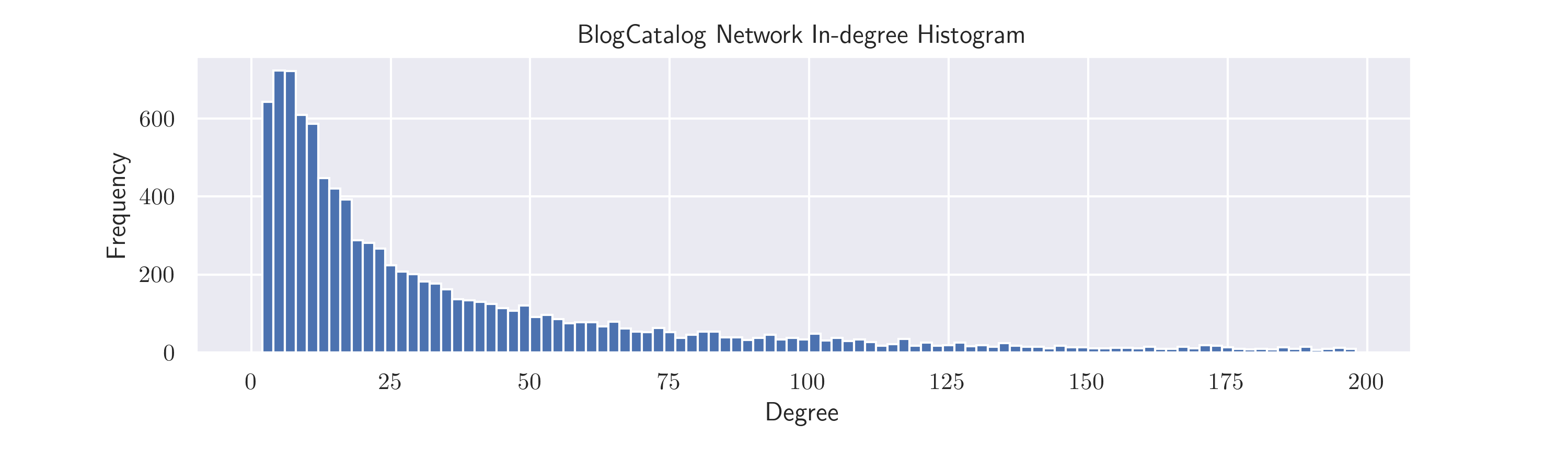}
    \caption{The degree distribution of the \textsc{BlogCatalog} graph}
    \label{fig:blog_degrees}
\end{figure}

Each blogger in the network has an associated blog. Blogs (and thus, bloggers) are organized under interest categories specified by the website and can be listed under multiple categories. There are 39 such categories in this dataset and on average; on average, each blogger is listed under 1.6 categories. As part of the data pre-processing for our experiments, we added self-loops to each node in the original dataset, as we did with the \textsc{Email} dataset.

\subsection*{\textsc{Amazon}}
The \textsc{Amazon} dataset is publicly available at \url{https://snap.stanford.edu/data/amazon-meta.html} and is licensed under the BSD license\footnote{For more information, see \url{https://snap.stanford.edu/snap/license.html} and \url{https://groups.google.com/g/snap-datasets/c/52MRzGbMkFg/m/FIFy_6qOCAAJ}}. This dataset models an Amazon product co-purchasing network. The $n=14,436$ nodes represent products and each node has outgoing edges to the top 5 products with which it is a frequent co-purchase. Thus, in addition to the self-loop at each node, each node has exactly 5 outgoing edges.

The network has a minimum in-degree of 1, a maximum in-degree of 247, and an average in-degree of 5; its in-degree distribution is visualized in Figure \ref{fig:amazon_degrees}; the support of the histogram has been cropped to remove some large outliers. 
\begin{figure}[h]
    \centering
    \includegraphics[width=0.8\textwidth]{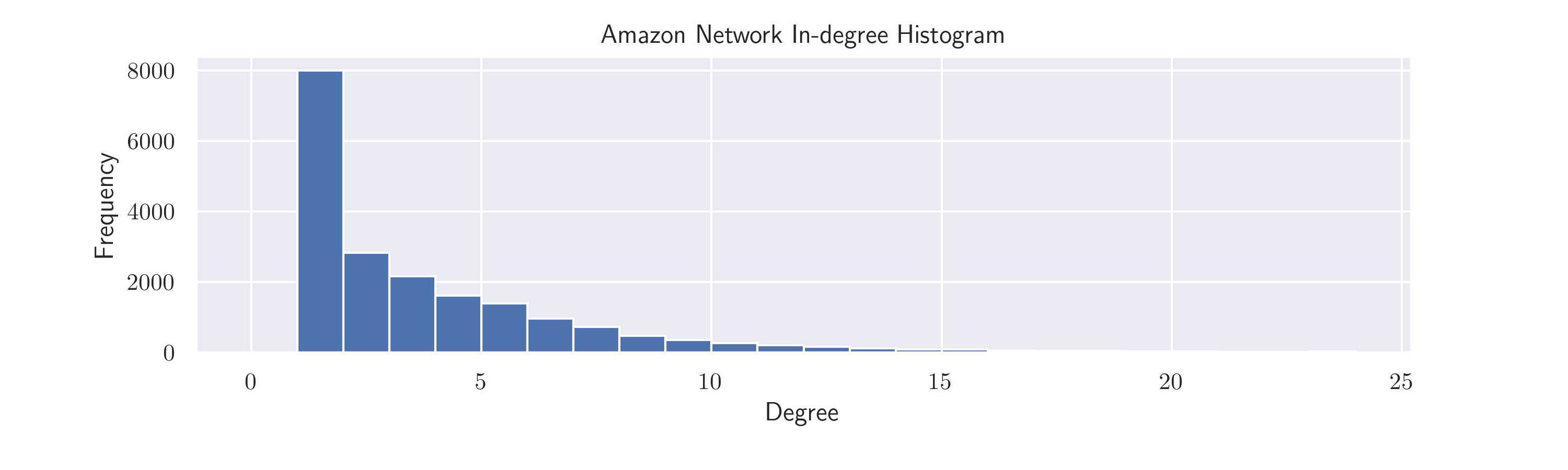}
    \caption{The degree distribution of the \textsc{Amazon} graph}
    \label{fig:amazon_degrees}
\end{figure}

Products are organized into categories (which correspond to attributes such as the genre, setting, and actors in the film, as well as marketplace data such as the inclusion of these titles in certain deals or promotions) but can belong to multiple categories. There are 13,591 possible product categories; on average each product belongs to 13.2 categories. As part of the data pre-processing for our experiments, we added all self-loops and restricted the original dataset to only the product nodes labeled as DVDs.

\subsection{Other Estimators} 
\label{app:other_ests}

Here, we provide additional details of the other estimators used in our experiments Section~\ref{sec:experiments}.

\subsection*{Difference-in-Means}

The difference-in-means (DM) approach estimates the total treatment effect as the difference in the average outcome of a treated individual and the average outcome of an untreated individual. 
\[
    \widehat{\TTE}_{\textrm{DM}} = \frac{\sum_{i} z_i \cdot Y_i(\bz)}{\sum_{i} z_i} - \frac{\sum_{i} (1-z_i) \cdot Y_i(\bz)}{\sum_{i} \;(1-z_i)}.
\]
This estimator does not utilize any knowledge of the underlying graph. It only requires knowledge of the treatment assignments and the observed outcomes, which are always known to the experimenter. Under SUTVA, this is an unbiased estimator (since an individual's outcome is a function only of their treatment). However, it is not unbiased under interference; untreated individuals may have treated neighbors which impacts their outcome, introducing bias into our signal of the baseline effects.\mcdelete{ In general, in settings where the treatment effects are positive. This phenomenon outweighs the bias from untreated neighbors of treated individuals, so we observe a negative bias.}

To counteract this bias, at the expense of requiring network knowledge, we can limit the set of individuals used in the estimator to those whose neighbors' treatment assignments largely align with theirs. We refer to this as a thresholded DM estimator.

\subsection*{Thresholded Difference-in-Means:} 

This family of estimators is parameterized by a value $\gamma \in [0,1]$, which can be viewed as a stringency requirement that we place on the treatment assignments within one's neighborhood. In particular, we only include an individual in the ``treated'' set in this DM estimator if they are treated and at least a $\gamma$ fraction of their neighbors are also treated.  Similarly, we will only include an individual in the ``untreated'' set in this DM estimator if they are untreated and at most a ($1-\gamma$) fraction of their neighbors are treated. 
\[
    \widehat{\TTE}_{\textrm{DM}(\gamma)} = \frac{\sum_{i} z_i \cdot Y_i(\bz) \cdot \Ind \big( \sum_{j \in \cN_i} z_i \geq \gamma d_i \big)}{\sum_{i} z_i \cdot \Ind \big( \sum_{j \in \cN_i} z_i \geq \gamma d_i \big)} 
    - \frac{\sum_{i} (1-z_i) \cdot Y_i(\bz) \cdot \Ind \big( \sum_{j \in \cN_i} z_i \leq (1-\gamma) d_i \big)}{\sum_{i} (1-z_i) \cdot \Ind \big( \sum_{j \in \cN_i} z_i \leq (1-\gamma) d_i \big)}
\]
Note that these estimators for $\gamma > 0$ require network knowledge to calculate the neighborhood treatment proportions, and they are biased under interference for the same reasoning as the standard DM estimator. Note that DM(0) (i.e., the thresholded DM estimator with parameter $\gamma = 0$) coincides with the ordinary DM estimator. The DM(1) estimator will only consider individuals with fully treated or untreated neighborhoods. As such, the DM(1) estimator will, under simpler randomization schemes like Bernoulli design, include very few individuals in its ``treated'' and ``untreated'' sets. The Horvitz-Thompson and H\'{a}jek estimators also exhibit this phenomenon.

\subsection*{Horvitz-Thompson:}

The Horvitz-Thompson (HT) estimator \citep{horvitz1952generalization} uses inverse probability weighting to construct an unbiased TTE estimator under \textit{arbitrary} potential outcomes models. To do this, it must only incorporate the outcomes from an individual's neighborhoods that are either fully treated or fully untreated, as these are the only outcomes that appear in the $\TTE$ estimand. The estimator has the form,
\begin{align*}
    \widehat{\TTE}_{\textrm{HT}} 
    &= \tfrac{1}{n} \sum_{i \in [n]} \frac{Y_i(\bz) \cdot \Ind(\cN_i \textrm{ fully treated})}{\Pr(\cN_i \textrm{ fully treated})}  - \tfrac{1}{n} \sum_{i \in [n]} \frac{Y_i(\bz) \cdot \Ind(\cN_i \textrm{ fully untreated})}{\Pr(\cN_i \textrm{ fully untreated})} \\
    &= \tfrac{1}{n} \sum_{i \in [n]} \bigg[ \tfrac{Y_i(\bz) \cdot \prod_{j \in \cN_i} z_j}{\Pr \big( \prod_{j \in \cN_i} z_j = 1 \big)} -\tfrac{Y_i(\bz) \cdot \prod_{j \in \cN_i} (1-z_j)}{\Pr \big( \prod_{j \in \cN_i} (1-z_j) = 1 \big)}
    \bigg] 
\end{align*}
This estimator is unbiased, but it relies on network knowledge to compute the exposure probabilities in the denominators of each fraction. A related inverse probability weighted estimator is the H\'{a}jek estimator.

\subsection*{H\'{a}jek}

Since the HT estimator only considers individuals with fully treated and fully untreated neighborhoods, most of the bracketed terms within its summation will be 0. To compensate for this, we can adjust the $1/n$ normalization on the summation to use the expected number of non-zero entries corresponding to both terms in the bracketed expression. This change gives the  H\'{a}jek estimator \citep{basu2011essay}.
\begin{align*}
    \widehat{\TTE}_{\textrm{H\'{a}jek}} = \frac{\sum\limits_{i \in [n]} \tfrac{ Y_i(\bz) \cdot \prod_{j \in \cN_i} z_j}{\Pr \big( \prod_{j \in \cN_i} z_j = 1 \big)}}{\sum\limits_{i \in [n]} \tfrac{\prod_{j \in \cN_i} z_j}{\Pr \big( \prod_{j \in \cN_i} z_j = 1 \big)}} 
    - \frac{\sum\limits_{i \in [n]} \tfrac{Y_i(\bz) \cdot \prod_{j \in \cN_i} (1-z_j)}{\Pr \big( \prod_{j \in \cN_i} (1-z_j) = 1 \big)}}{\sum\limits_{i \in [n]} \tfrac{\prod_{j \in \cN_i} (1-z_j)}{\Pr \big( \prod_{j \in \cN_i} (1-z_j) = 1 \big)}}
\end{align*}
The H\'{a}jek estimator trades off a reduction in the variance over the HT estimator for the introduction of some bias (a thorough discussion of this tradeoff is given by~\citet{khan2023adaptive}). As with the Horvitz-Thompson estimator, the calculation of exposure probabilities in this estimator requires knowledge of the interference network.

\subsection*{Two-Stage Estimator when $q=1$}

The one-stage estimator from \citet{cortez2022neurips} is 
\begin{equation} \label{eq:estimatorDefn-1Stage}
    \widehat{\TTE}_{\textrm{1-Stage}}^{\beta} := 
        \frac{1}{n} \sum\limits_{i=1}^{n}  \sum\limits_{t=0}^{\beta} \Big( \ell_{t,p}(1) - \ell_{t,p}(0) \Big) \cdot Y_i(\mathbf{z}^t),
        \quad
        \ell_{t,p}(x) = \prod_{\substack{s=0 \\ s\not=t}}^{\beta} \tfrac{\beta x-ps}{pt-ps}.
\end{equation}
When evaluating the estimator with $\beta=1$, the estimator is simply
\begin{equation} \label{eq:beta=1 estimator}
    \widehat{\TTE}_{\textrm{1-Stage}}^{\beta=1} = \frac{1}{np} \sum\limits_{i=1}^{n} \Big( Y_i(\bz^1) - Y_i(\bz^0) \Big)
\end{equation}
In what follows, we show that the two-stage rollout estimator with $q=1$ is equivalent to $\widehat{\TTE}_{\textrm{1-Stage}}^{\beta=1}$.

\begin{theorem} \label{thm:q=1,2-stage,equivalence}
    Under a Two-Stage Rollout Design with budget $p$ and effective treatment budget $q=1$, the two-stage estimator defined in equation \eqref{eq:estimatorDefn} under a model with degree $\beta$ is equivalent to the estimator defined in \eqref{eq:beta=1 estimator}.
\end{theorem}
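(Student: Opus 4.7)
The plan is to show that the coefficients $h_{t,q}$ collapse to a very simple form when $q=1$, causing almost every term in the sum \eqref{eq:estimatorDefn} to vanish. Recall from the bias proof that $h_{t,q} = \ell_{t,\bx}(1) - \ell_{t,\bx}(0)$, where $\ell_{t,\bx}$ is the $t$-th Lagrange basis polynomial on the evaluation points $\bx = (tq/\beta)_{t=0}^{\beta}$. When $q=1$, these evaluation points are $0, 1/\beta, 2/\beta, \ldots, 1$, so $\ell_{t,\bx}(1) = \Ind(t = \beta)$ and $\ell_{t,\bx}(0) = \Ind(t=0)$ by the defining interpolation property of the Lagrange basis. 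Hence $h_{\beta,1} = 1$, $h_{0,1} = -1$, and $h_{t,1} = 0$ for $0 < t < \beta$.

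Plugging $q=1$ and these simplified coefficients into \eqref{eq:estimatorDefn} immediately yields
\[
    \widehat{\TTE} = \tfrac{1}{np} \sum_{i=1}^n \bigl( Y_i(\bz^\beta) - Y_i(\bz^0) \bigr),
\]
which matches the formula of \eqref{eq:beta=1 estimator} with $\bz^1$ replaced by $\bz^\beta$. It then remains to argue that these assignment vectors are in fact the same object (distributionally and structurally) in the two settings being compared. Under the Two-Stage Rollout with $q=1$, Stage 1 selects $\cU$ with $|\cU| = pn$ and marginal probability $p$ for each unit, so $\cU \sim \mathrm{CRD}(pn,n)$; the per-round treatment condition at $t=\beta$ gives $\bz^\beta_\cU \sim \mathrm{CRD}(pn, pn)$, meaning every unit in $\cU$ is treated at the final stage. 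Combined with the treatment restriction $z_i^\beta = 0$ for $i\notin\cU$, this shows $\bz^\beta$ has the entries of $\cU$ as its treated set, which is identically distributed to $\bz^1 \sim \mathrm{CRD}(pn,n)$ in the one-stage $\beta=1$ design. Also $\bz^0 = \mathbf{0}$ in both designs by construction.

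I expect no serious obstacle; the only delicate point is making the Lagrange-basis identification fully precise rather than merely stating it, since the estimator on the left appears as a general degree-$\beta$ interpolation formula while the right-hand side is a simple difference of means. I would either cite the identity $h_{t,q} = \ell_{t,\bx}(1) - \ell_{t,\bx}(0)$ already used in the proof of Theorem~\ref{thm:bias}, or give the one-line evaluation $\prod_{s \ne \beta}(\beta-s)/(\beta-s) = 1$ and $\prod_{s \ne \beta}(-s)/(\beta - s) = 0$ (and analogously for $t=0$) to make the vanishing of the interior coefficients self-contained.
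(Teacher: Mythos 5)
Your proposal is correct and takes essentially the same route as the paper: the entire content of the paper's proof is the collapse $h_{0,1}=-1$, $h_{\beta,1}=1$, $h_{t,1}=0$ for $0<t<\beta$, which you obtain via the Lagrange-basis evaluation property $\ell_{t,\bx}(1)=\Ind(t=\beta)$, $\ell_{t,\bx}(0)=\Ind(t=0)$ rather than the paper's direct computation with the product formula, a purely cosmetic difference, and your explicit plugging-in step is left implicit in the paper. One small caveat in your extra distributional remark: the inference ``$|\cU|=pn$ with marginals $p$, so $\cU\sim\mathrm{CRD}(pn,n)$'' holds for the Unit CRD variant but not for the Clustered variant of Definition~\ref{def:2stage_design}; this does not affect the theorem, which only asserts equivalence of the estimator formulas.
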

\begin{proof}
    Under a Two-Stage Rollout Design with $q=1$, we have 
    \[
    h_{t,q} = h_{t,1} = \prod_{\substack{s=0 \\ s\not=t}}^{\beta} \tfrac{\beta-s}{t-s} \ - \ \prod_{\substack{s=0 \\ s\not=t}}^{\beta} \tfrac{-s}{t-s}.
    \]
Notice that when $t\in\{1,2,\ldots,\beta-1\}$, i.e. $t\ne0$ and $t\ne\beta$, at some point we have a term corresponding to $s=0$ and $s=\beta$ in the products above. Thus, both products are $0$.

When $t=0$, we have 
$
h_{0,1} \ = \ \prod_{\substack{s=1}}^{\beta} \tfrac{\beta-s}{0-s} \ - \ \prod_{\substack{s=1}}^{\beta} \tfrac{-s}{0-s} \ = \ \prod_{\substack{s=1}}^{\beta} \tfrac{\beta-s}{-s} \ - \ 1 \ = \ -1
$
because the product equals $0$ due to the $s=\beta$ term. Similarly, when $t=\beta$, we have 
$
h_{\beta,1} \ = \ \prod_{\substack{s=0}}^{\beta-1} \tfrac{\beta-s}{\beta-s} \ - \ \prod_{\substack{s=0}}^{\beta-1} \tfrac{-s}{\beta-s} \ = \ 1
$
since the second product will equal $0$ due to the $s=0$ term.
To summarize, we have
\[
h_{t,q} = 
\begin{cases}
    -1 & t = 0\\
    0 & 1 \leq t \leq \beta-1 \\
    1 & t = \beta
\end{cases}.
\]
\end{proof}

\subsection{Additional Experiments: Comparing Different Estimators}

In this section, we have figures showing the MSE of different estimators as we vary the treatment budget $p$ from $0.1$ to $0.5$ for different model degrees $\beta$ and different real-world networks. As a reminder, we compare the following estimators:
\begin{itemize}
    \item The two-stage polynomial interpolation estimator with $q=0.5$ and no clustering, \textsf{2-Stage}
    \item The two-stage polynomial interpolation estimator with $q=1$ and no clustering, \textsf{q=1}
    \item The one-stage polynomial interpolation estimator, \textsf{PI}, from \citet{cortez2022neurips}
    \item The simple difference-in-means estimator, \textsf{DM}
    \item The thresholded difference-in-means estimator with parameter $0.75$, \textsf{DM(0.75)}
    \item The H\'{a}jek estimator, \textsf{H\'{a}jek}
\end{itemize}
The first three estimators in this list are based on polynomial interpolation (PI), so we refer to them as the PI estimators. We refer to the others as the non-PI estimators. In all MSE plots, the lines indicate the empirical MSE over 1000 replications. In all bias and standard deviation plots, the bold line indicates the mean over 1000 replications, and the shading indicates the experimental standard deviation, calculated by taking the square root of the experimental variance over all replications. 

In Figure \ref{fig:amazon_est_mse}, we show the MSE corresponding to Figure \ref{fig:amazon} from Section \ref{sec:experiments}. The column faceting indicates model degree; note that the $y$-axis limits differ across these subplots. When $\beta=1$, \textsf{PI} and \textsf{q=1} are equivalent and have slightly lower MSE compared with \textsf{2-Stage}. However, the difference is hard to see without zooming in further since the lines almost overlap. Note that the difference-in-means estimators have MSE outside the bounds of the plots. When $\beta=2$, the results are similar but you can start to see the difference between the three PI estimators, which all have lower MSE when compared with the non-PI estimators. For smaller values of $p$, the estimator \textsf{PI} has slightly lower MSE, followed by \textsf{2-Stage}, followed by \textsf{q=1}. Again, the difference is quite small. In this case, as noted in the main body of the paper, we are in a setting where using the one-stage rollout and estimator is preferable.
When $\beta=3$, the difference between the three PI estimators is more pronounced. The \textsf{2-Stage} has the lowest MSE, especially for lower values of $p$. The \textsf{q=1} estimator has MSE relatively close to it for all $p$-values, but does slightly worse, although better than \textsf{PI} for small $p$ values. In this case, we have a setting where the two-stage approach is valuable as it outperforms the other methods.

\begin{figure}[h]
    \centering
    \includegraphics[width=0.8\linewidth]{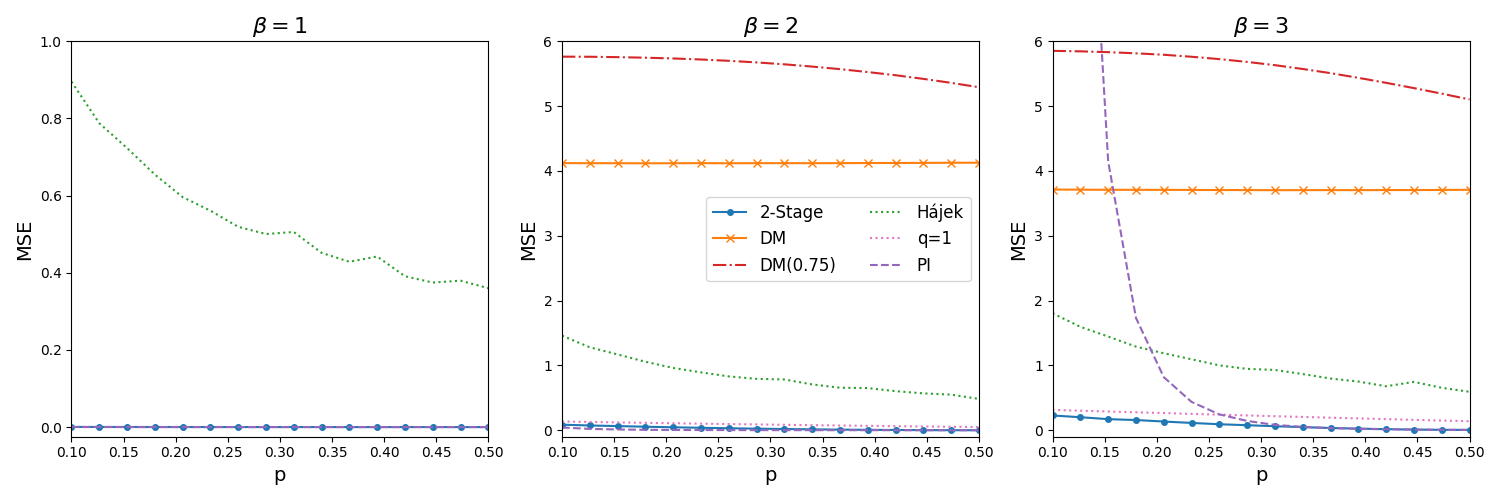}
    \caption{\textsc{Amazon} Network. MSE of different estimators as a function of treatment budget $p$.}
    \label{fig:amazon_est_mse}
\end{figure} 

\begin{figure}[h]
\centering
    \begin{subfigure}[b]{0.9\textwidth}
        \centering
        \includegraphics[width=0.9\linewidth]{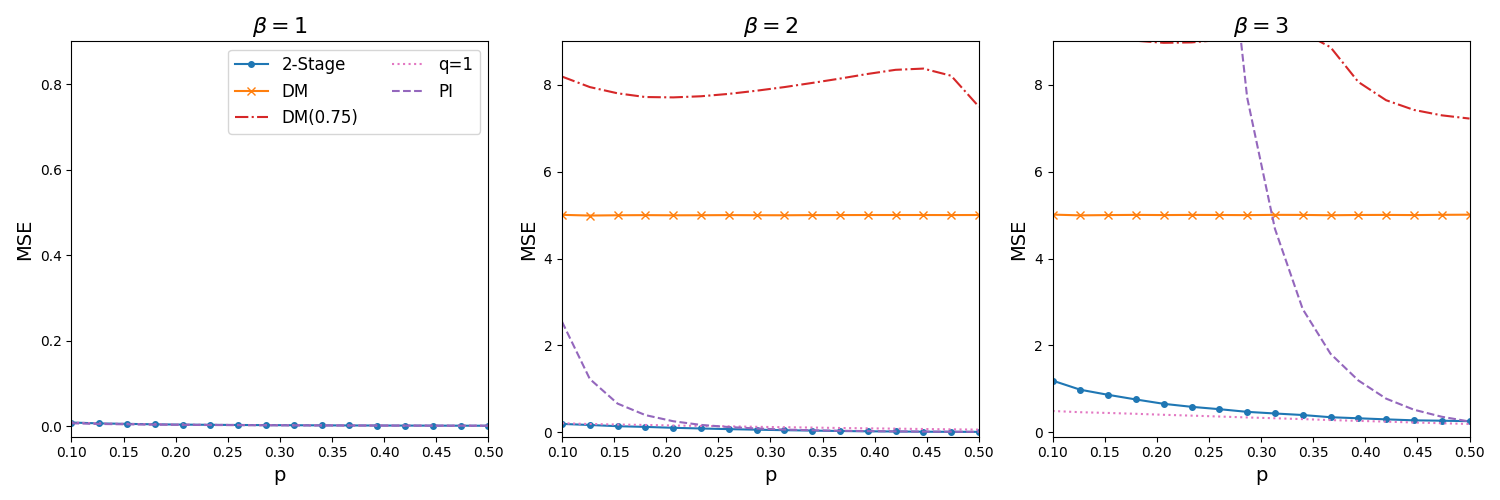}
        \caption{MSE of different estimators as a function of treatment budget $p$. }
        \label{fig:blog_est_mse}
    \end{subfigure}
    \begin{subfigure}[b]{0.9\textwidth}
        \centering
        \includegraphics[width=0.9\linewidth]{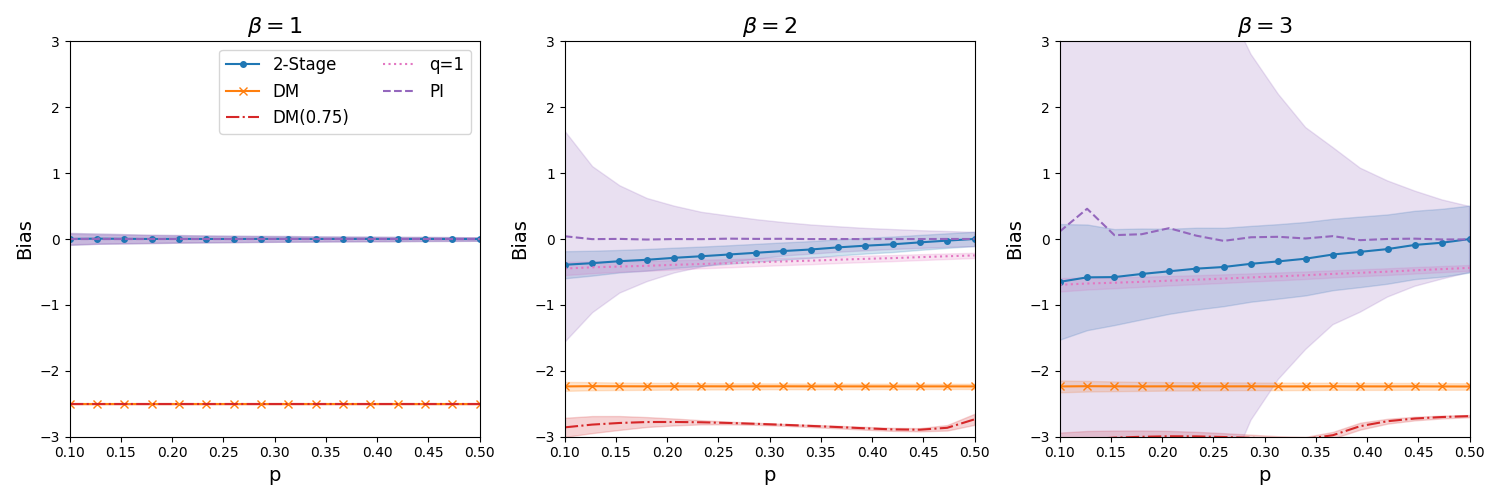}
        \caption{Bias and variance of different estimators as a function of treatment budget $p$.}
        \label{fig:blog_est}
    \end{subfigure}
    \caption{\textsc{BlogCatalog} Network.} \label{fig:blog}
\end{figure}

In Figure \ref{fig:blog}, we show the MSE and the bias and standard deviation of the different estimators under the \textsc{BlogCatalog} network. We omit the \textsf{H\'{a}jek} estimator because the network degree is very high; under unit randomization, the estimator is often undefined. In all cases, the two difference-in-means estimators are very biased, so their MSE is much worse than the PI estimators. Similar to the \textsc{Amazon} network, when $\beta=1$ the PI estimators are almost indistinguishable, with \textsf{2-Stage} coming in with slightly higher MSE due to a small increase in variance. The difference-in-means estimators do not appear on the plot since their MSE exceeds the plotting range. When $\beta=2$, we see that \textsf{PI} has higher MSE for smaller values of $p$ due to an increased variance (the estimator is unbiased). \textsf{2-Stage} has much lower variance than \textsf{PI} and its bias decreases as $p$ approaches $0.5$. \textsf{q=1} has even lower variance than \textsf{2-Stage} and similar bias, but its bias remains worse than \textsf{2-Stage}. However, their MSE remains comparable. When $\beta=3$, there is a much clearer difference between the PI estimators. For most values of $p$, \textsf{PI} has the worst variance (due to extrapolation with a richer model), while \textsf{q=1} has the worst bias (due to the heavier reliance on the subsampling). Meanwhile, \textsf{2-Stage} is in between, with lower bias, but larger variance than \textsf{q=1} and with higher bias, but lower variance than \textsf{PI}. In terms of MSE, \textsf{q=1} outperforms the other estimators. Recall that the \textsf{q=1} estimator is equivalent to the $\beta=1$ version of the one-stage PI estimator. This setting shows that although the two-stage approach can greatly reduce error over the one-stage approach even without clustering, an even simpler design (one-stage rollout over just two time steps) and estimator (using observations from the two time steps) can still outperform.

\begin{figure}[h]
\centering
    \begin{subfigure}[b]{0.9\textwidth}
        \centering
        \includegraphics[width=0.9\linewidth]{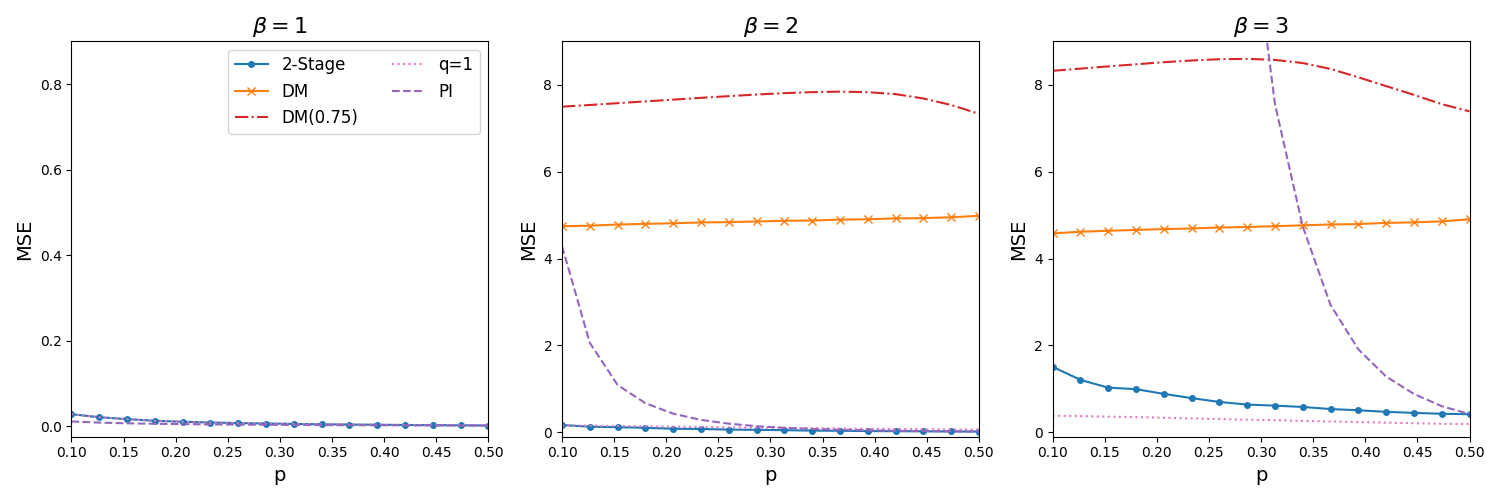}
        \caption{MSE of different estimators as a function of treatment budget $p$. }
        \label{fig:email_est_mse}
    \end{subfigure}
    \begin{subfigure}[b]{0.9\textwidth}
        \centering
        \includegraphics[width=0.9\linewidth]{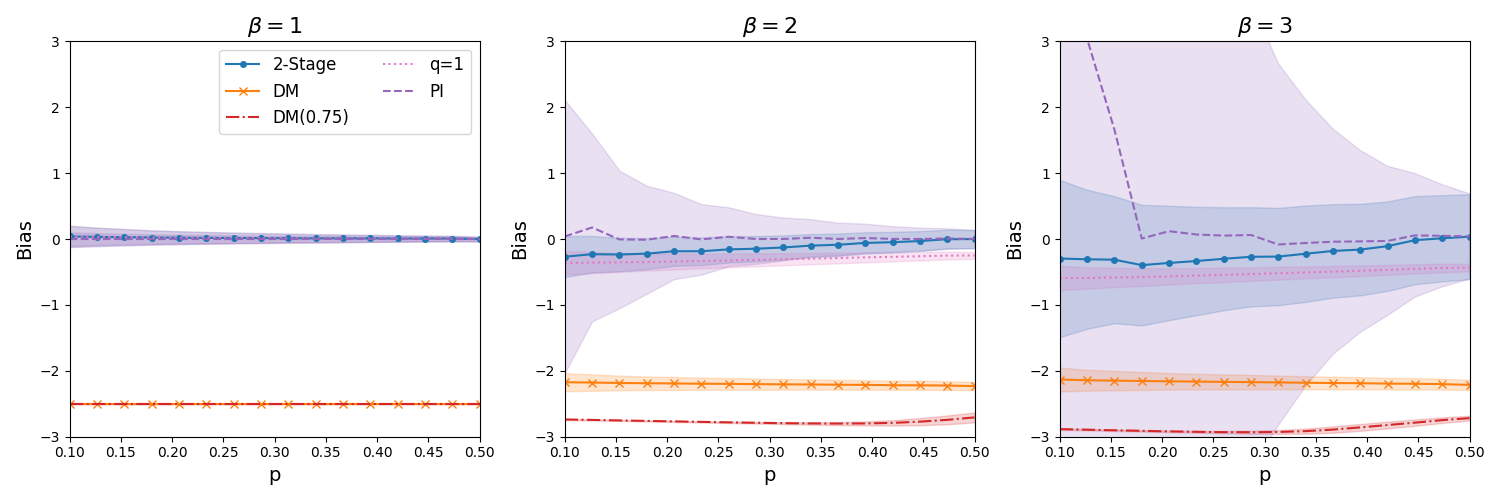}
        \caption{Bias and variance of different estimators as a function of treatment budget $p$. }
        \label{fig:email_est}
    \end{subfigure}
    \caption{\textsc{Email} Network.} \label{fig:email}
\end{figure}

The results from Figure \ref{fig:email} are the same as Figure \ref{fig:blog}.

\subsection{Additional Experiments: Comparing Different Clusterings}
We compare the performance of the \textsf{2-Stage} estimator under two clustering methods versus no clustering in the first stage of the experimental design. In the \textbf{clustering with full graph knowledge}, we cluster the true underlying graph using the \textsc{METIS} clustering library \cite{karypis1998fast}. In the \textbf{clustering with covariate knowledge}, clusters are based on features. When each vertex is assigned to one feature, we use these assignments as the clustering. When vertices may have multiple features we form a feature graph --- a weighted graph, where the weight of edge $(i,j)$ is the number of feature labels shared by $i$ and $j$ --- and cluster this feature graph using \textsc{METIS}. In all plots, the column faceting indicates the type of clustering and the $y$-axis varies $q$ on the interval $[p,1]$, where $p=0.1$. 

We also include tables with various pieces of information pertaining to the performance of the two-stage design and estimator, including clustering metrics such as number of cut edges, the cut effect  $C(\delta(\Pi))$, and the empirical variance across clusters of cluster average influences $\widehat{\Var}\big(\bar{L}_\pi\big)$. The latter two metrics are defined in Section \ref{sec:newmethod}. In each row, $q_{\min}$ is the value of $q$ that minimizes the MSE and the column MSE($q_{\min}$) contains that value.

\begin{figure}[h]
    \centering
    \includegraphics[width=\linewidth]{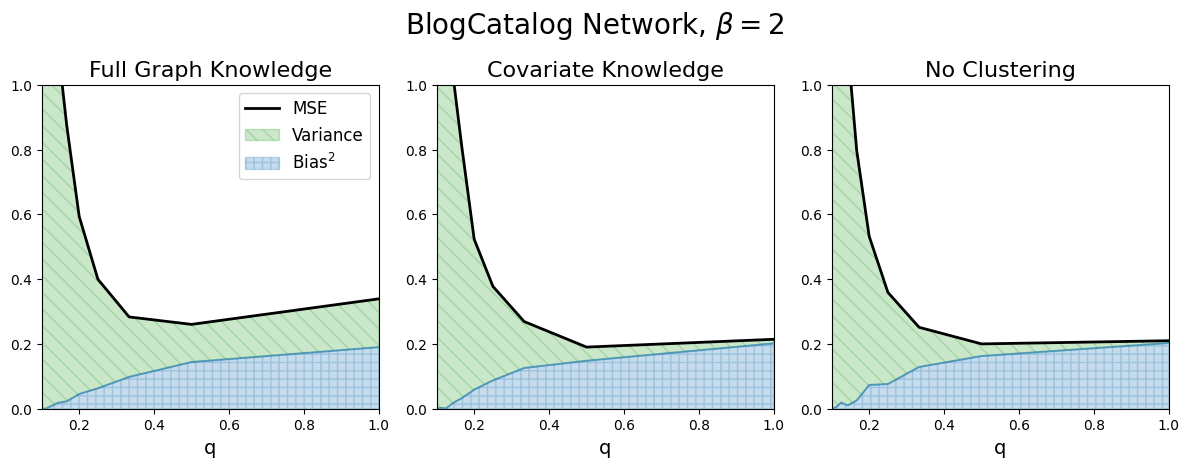}
    \caption{\textsc{BlogCatalog} Network. MSE of \textsf{2-Stage} estimator under two different clusterings versus no clustering, as a function of $q$.}
    \label{fig:blog-clusterings}
\end{figure}

\begin{table}[h]
    \centering
    \caption{Clustering Metrics for \textsc{BlogCatalog} Network}
    \label{tab:clustering_metrics_blog}
    
    \begin{tabular}{c|c|c|c|c|c|c}
        $\beta$ & Cluster & $\widehat{\Var}\big(\bar{L}_\pi\big)$ & $C(\delta(\Pi))$ & Cut Edges & $q_{\min}$ & MSE($q_{\min}$) \\
        \hline
        $2$ & Full & 0.697 & 0.471 & 604504 & 0.5 & 0.260\\
        $2$ & Covariate & 0.059 & 0.486 & 643080 & 0.5 & 0.190\\
        $3$ & Full & 0.703 & 0.717 & 604504 & 1 & 0.610\\
        $3$ & Covariate & 0.060 & 0.734 & 643080 & 1 & 0.486 
    \end{tabular}
\end{table}

In Figure \ref{fig:blog-clusterings}, we show the results for the \textsc{BlogCatalog} network under a model with degree $\beta=2$. In this case, the clusterings each have $n_c=50$ clusters. For this network, clustering does not appear to be of any help in reducing the bias and at worst, under a clustering that uses full graph knowledge, increases variance. Taking a look at the first two rows of Table \ref{tab:clustering_metrics_blog} sheds some light on this. We see that the $\widehat{\Var}\big(\bar{L}_\pi\big)$ under a clustering that uses full graph knowledge is more than ten times higher than a clustering that only uses covariate information. The number of cut edges is similar under both clusterings and thus so is the cut effect.  
In this example, it would appear that one is better off not clustering at all. 

In Figure \ref{fig:email-clusterings}, we show the results for the \textsc{Email} network under a model with degree $\beta=2$. In this case, the clusterings each have $n_c=42$ clusters. We see an advantage with clustering on covariates in particular. The highest bias, but lowest variance, is under no clustering. The clustering with full knowledge certainly decreases variance versus no clustering, but at the expense of incurring a lot of variance. The lowest MSE is achieved by the covariate knowledge clustering at $q=1$, which strikes a balance between bias and variance. Taking a look at Table \ref{tab:clustering_metrics_email}, we see that the $\widehat{\Var}\big(\bar{L}_\pi\big)$ term is similar under both clusterings. However, the covariate clustering cuts about a quarter less edges than the full knowledge clustering and thus has a smaller cut effect. 

\begin{figure}[h!]
    \centering
    \includegraphics[width=\linewidth]{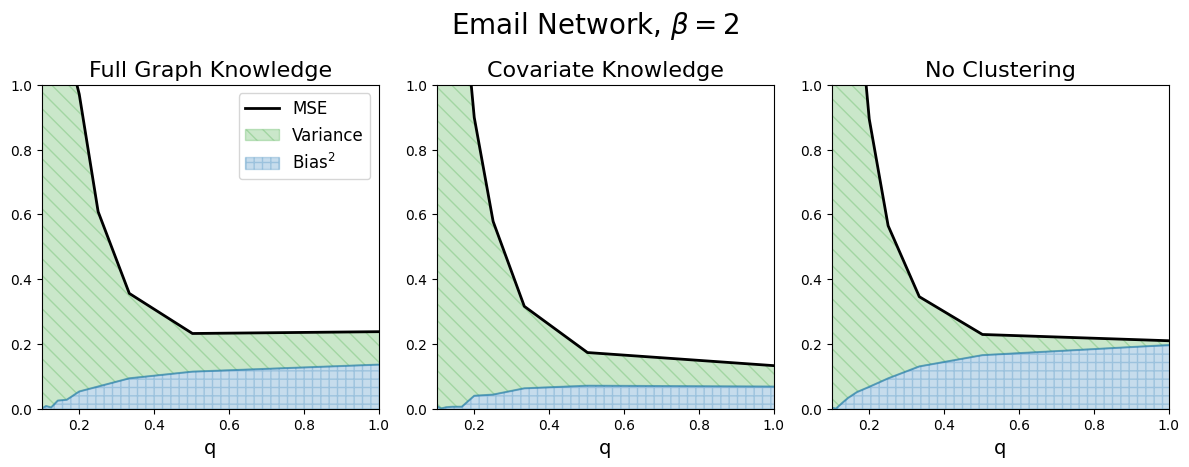}
    \caption{\textsc{Email} Network. MSE of \textsf{2-Stage} estimator under two different clusterings versus no clustering, as a function of $q$.}
    \label{fig:email-clusterings}
\end{figure}

\begin{table}[h]
    \centering
    \caption{Clustering Metrics for Email Network}
    \label{tab:clustering_metrics_email}
    
    \begin{tabular}{c|c|c|c|c|c|c}
        $\beta$ & Cluster & $\widehat{\Var}\big(\bar{L}_\pi\big)$ & $C(\delta(\Pi))$ & Cut Edges & $q_{\min}$ & MSE($q_{\min}$) \\
        \hline
        $2$ &Full & 0.399 & 0.442 & 21756 & 0.5 & 0.232\\
        $2$ & Covariate & 0.398 & 0.372 & 16284 & 1 & 0.133 \\
        $3$ &Full & 0.417 & 0.686 & 21756 & 1 & 0.483\\
        $3$ & Covariate & 0.412 & 0.591 & 16284 & 1 & 0.288
    \end{tabular}
\end{table}

\subsection{Additional Experiments: Homophily Parameter $b=0.5$.}
In this section, we show some results when the model exhibits homophily by setting the parameter $b=0.5$. All other parameters are set to the same values as previous plots. Although there are some small visual differences between the plots in this section and the plots throughout the rest of this work, the analyses and conclusions remain the same.
For example, we can compare Figure \ref{fig:amazon_est_mse} (where $b=0$) with \ref{fig:amazon_est_homophily} (where $b=0.5$). Both of these show the MSE of different estimators for different values of treatment budgets $p$ and different model degrees $\beta$. Notice the difference in the scaling on the $y$-axis, particularly for $\beta=2$ and $\beta=3$. However, the patterns are the same: for most values of $p$, the two difference in means estimators have the highest MSE, followed by the H\'{a}jek estimator, and then followed by the three PI estimators. When $\beta=3$, the MSE of the vanilla PI estimator is extremely high for small values of $p$, but gets smaller than the non-PI estimators around $p=0.2$. In general, for $\beta=3$, \textsf{2-Stage} tends to outperform \textsf{PI} for many paramter values and for some networks, \textsf{q=1} has the smallest MSE in some cases. When $\beta=2$, we see that the two stage approach improves over the one stage approach under the \textsc{BlogCatalog} and \textsc{Email} networks for small values of $p$. When $\beta=1$ the performances of the PI estimators are similar, with \textsf{2-Stage} performing ever so slightly worse. 

In Figure \ref{fig:compare_clusterings_homophily}, we show the performance of the two stage approach under two different clustering versus no clustering for three different networks. The \textsc{BlogCatalog} and \textsc{Email} network results are very similar to those in Figures \ref{fig:blog-clusterings} and \ref{fig:email-clusterings}. The Amazon network result sheds some light onto why the scaling is different in the Amazon MSE plots: the bias has a larger magnitude. Part of this is likely attributable to the fact that switching from $b=0$ to $b=0.5$ changed the magnitude of the baseline outcomes, and therefore all outcomes.

\begin{figure}[h]
    \centering
    \begin{subfigure}[b]{\textwidth}
        \centering
        \includegraphics[width=\linewidth]{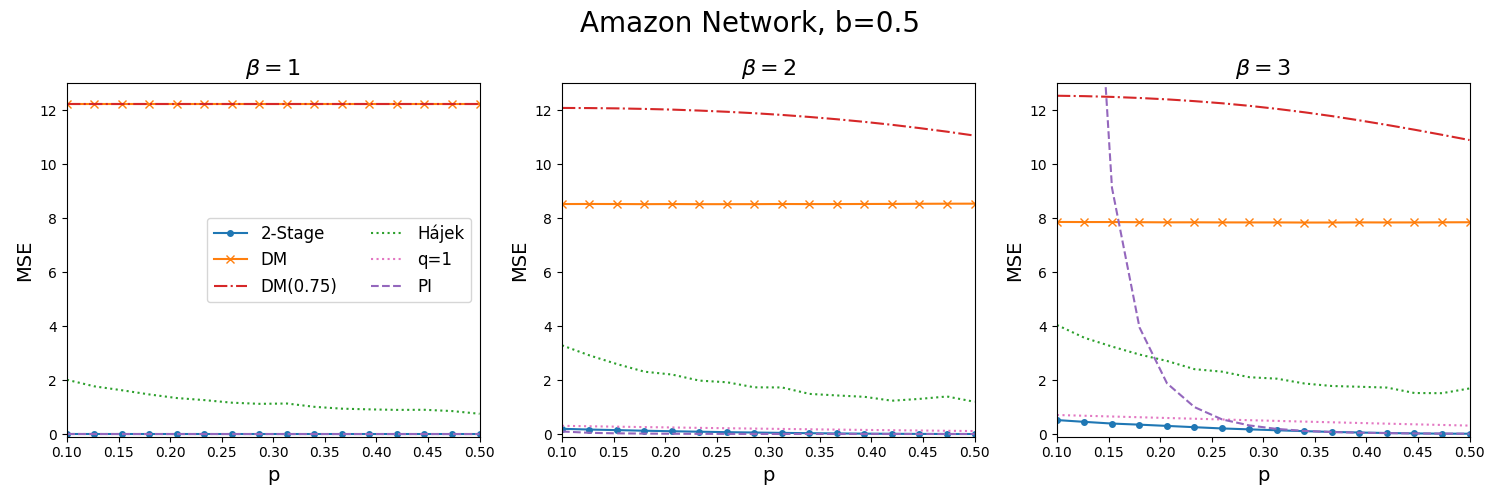}
        \caption{}
        \label{fig:amazon_est_homophily}
    \end{subfigure}
    \begin{subfigure}[b]{\textwidth}
        \centering
        \includegraphics[width=\linewidth]{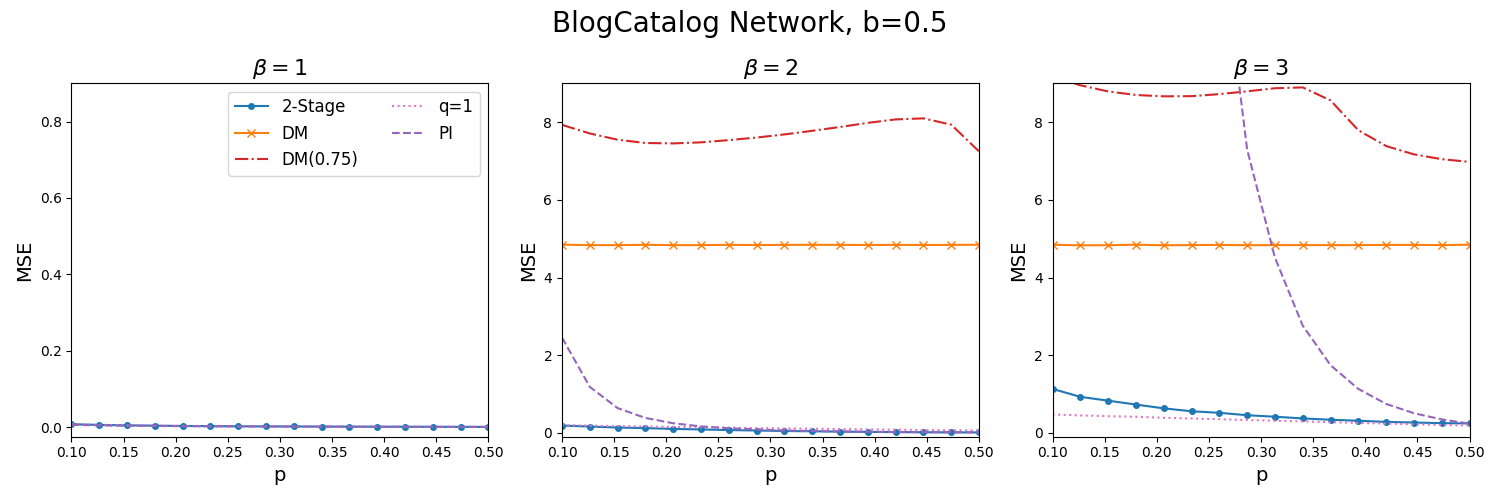}
        \caption{}
    \end{subfigure}
    \begin{subfigure}[b]{\textwidth}
        \centering
        \includegraphics[width=\linewidth]{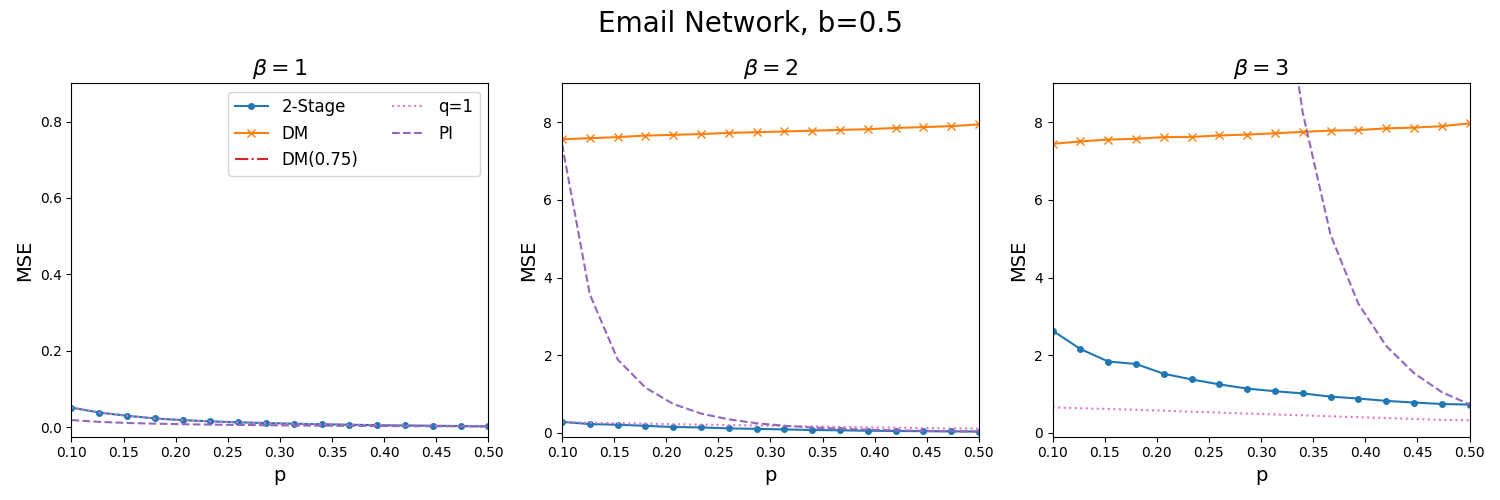}
        \caption{}
    \end{subfigure}
    \caption{} \label{fig:compare_estimators_homophily}
\end{figure}

\begin{figure}[h]
    \centering
    \begin{subfigure}[b]{\textwidth}
        \centering
        \includegraphics[width=0.9\linewidth]{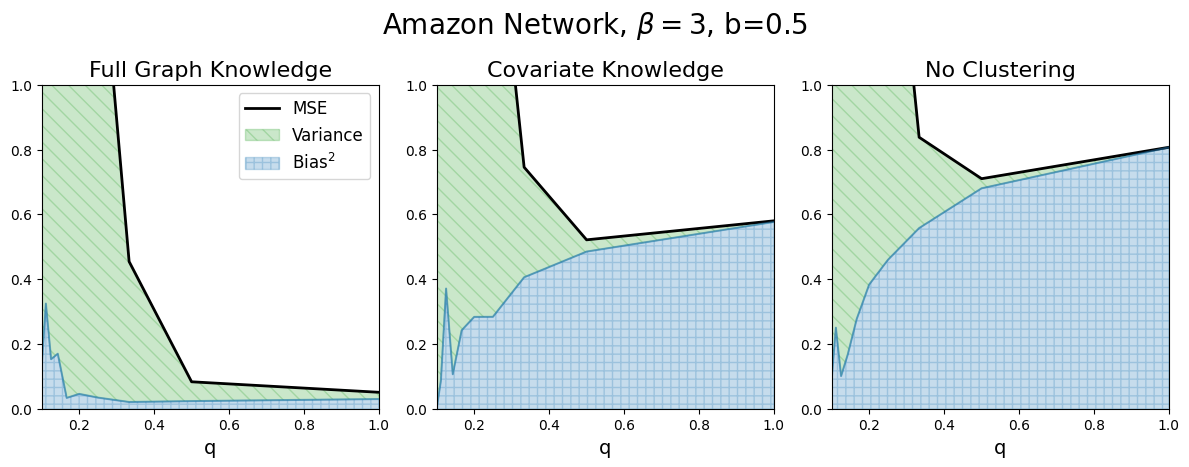}
        \caption{}
    \end{subfigure}
    \begin{subfigure}[b]{\textwidth}
        \centering
        \includegraphics[width=0.9\linewidth]{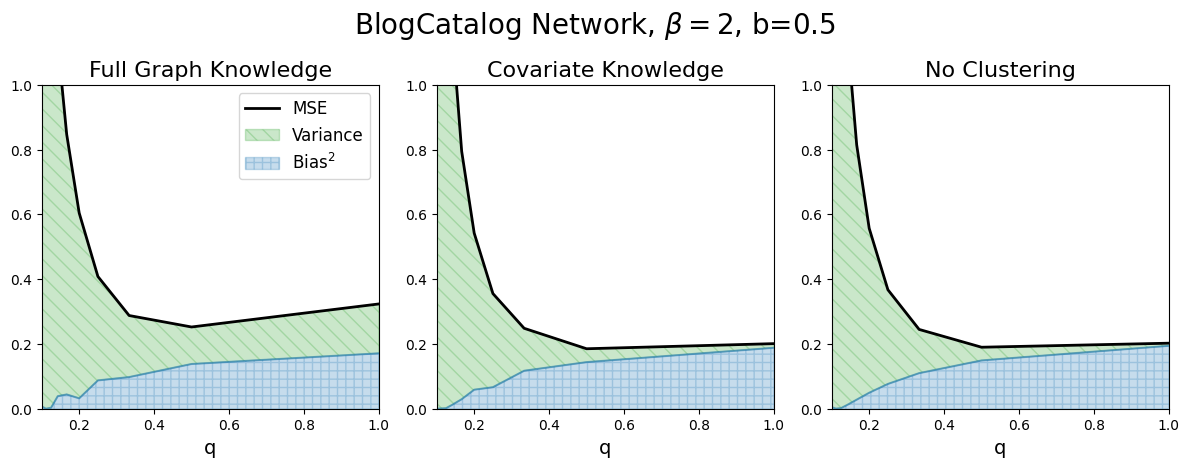}
        \caption{}
    \end{subfigure}
    \begin{subfigure}[b]{\textwidth}
        \centering
        \includegraphics[width=0.9\linewidth]{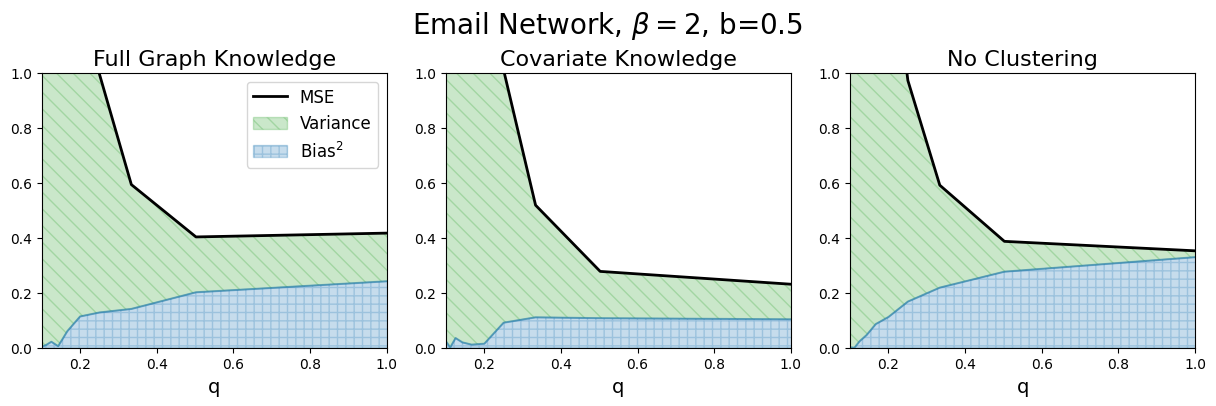}
        \caption{}
    \end{subfigure}
    \caption{} \label{fig:compare_clusterings_homophily}
\end{figure}

\end{document}